\documentclass[doublecol]{epl2}
\usepackage{graphicx} 
\usepackage{xcolor} 
\usepackage{mathtools}
\usepackage{amssymb, amsmath, amsfonts,dsfont}
\usepackage{physics}
\usepackage{bm} 
\usepackage{subcaption}
\usepackage{comment}
\usepackage{float}
\usepackage{cleveref}
\usepackage{soul}
\usepackage{cuted}
\usepackage[normalem]{ulem}
\usepackage{amsthm}

\newcommand{\red}[1]{\textcolor{red}{#1}}

\makeatletter
\newcommand{\vast}{\bBigg@{3}}
\newcommand{\Vast}{\bBigg@{4}}
\newcommand{\Vastt}{\bBigg@{5}}
\makeatother

\newtheorem*{Thm*}{{\it\bfseries Main result}}

\newtheorem{Lem}{Lemma}

\newtheorem{Theorem}{Theorem}

\newcommand\restr[2]{{
  \left.\kern-\nulldelimiterspace 
  #1 
  \right|_{#2} 
  }}
\makeatletter

\newcommand{\maintitle}{Spectral criteria for generalization in unsupervised Hebbian nets}
\title{\maintitle}

\title{\maintitle}
\shorttitle{\maintitle}
\author{Elena Agliari\inst{1} \and Paulo Duarte Mourão\inst{1}, Alberto Fachechi\inst{1} \and Pierpaolo Vivo\inst{2}}
\shortauthor{Elena Agliari \etal}
\institute{                    
  \inst{1} Dipartimento di Matematica, Sapienza Universit\`a di Roma, Rome, Italy.\\
  \inst{2}Department of Mathematics, King’s College London, The Strand, London WC2R 2LS, UK.
}
\pacs{84.35.+i}{Neural networks}
\pacs{75.10.Nr}{Spin-glass models}
\pacs{64.60.De}{Statistical mechanics of phase transitions in model systems}

\abstract{ We consider an unsupervised Hebbian network where the pairwise interactions among neurons are built on noisy realizations of hidden ground-truth vectors. Unlike classical Hopfield models, designed as memory devices, this class of networks can be employed to extract latent structure and generalize beyond the ``training'' set. By combining random matrix theory and replica methods, we derive the asymptotic spectrum of the corresponding interaction matrix and show that the onset of generalization is controlled by a sharp spectral transition. Depending on the quality and the size of the accessible dataset, the spectrum displays either two separated bulks, encoding informative and noisy directions, or a merged single-bulk phase where such distinction is lost. We show that, when coupled with regularization, the emergence of such a spectral split predicts the network’s capability to reconstruct the ground-truth vectors from corrupted samples.
}

\begin{document}
\setstcolor{red}
\maketitle




\section{Introduction}
 Associative neural networks provide a paradigmatic framework to investigate how collective systems can store, retrieve, and generalize information. 
In particular, the Hopfield network, being inspired by spin-glass models, describes a set of binary neurons interacting pairwise to minimize the system's overall energy; the latter is specified by a coupling matrix, trained upon a set of data patterns, in such a way that, when a certain query is given as input, the neurons iteratively rearrange to reach a stable state which corresponds to the output associated to the input. In the simplest scenario, the task is the reconstruction of a set of patterns that are available and directly stored in the interaction matrix, typically by Hebb's rule. More challenging tasks include sequence retrieval, categorization, generation, disentanglement, and much more, see e.g., \cite{Branchtein-JPA1992, Leuzzi_2022,AAKBA-EPL2023,Agliari_HeteroTAM,Negri-Physa2025}.  In standard analytical formulations, patterns are assumed to be (high-dimensional) orthogonal, i.i.d., random vectors. While ensuring tractability, this assumption limits the validity of the results in realistic scenarios, where correlations, redundancies, or latent organization displayed by empirical data are essential features for information processing. 
\par
Recently, increasing attention has been devoted to structured datasets, in order to understand how their internal organization shapes the emerging properties of associative neural networks and triggers concept formation and feature learning, shifting from pure memorization to inference capabilities \cite{AABD-NN2022, Negri-PRL2023,AAAF-NN2024,benedetti2026}. 
Despite these novel objectives, the mathematical framework remains essentially unchanged: the information encoded in the network weights can still be regarded as random samples drawn from a probability distribution in a high-dimensional space. However, the presence of underlying correlations violates the independence assumptions on which most analytical results rely. Extending such results beyond the i.i.d.\ setting therefore constitutes a fundamental challenge in the rigorous analysis of neural networks trained on empirical data.
\par
Moving in this direction, we consider a controlled yet non-trivial dataset made of noisy realizations (examples) of unknown ground-truth patterns (archetypes), shifting the task from standard pattern retrieval to \emph{generalization}: the objective is no longer the recovery of a specific, stored pattern, but the reconstruction of the underlying ground-truth from its corrupted realizations. Our analysis starts by deriving the asymptotic spectral properties of the Hebbian interaction matrices built on the noisy samples. 
We find that generalization in these networks is triggered by a \emph{spectral transition}: depending on the quality and size of the sample, the asymptotic spectrum exhibits either two separated components -- corresponding to, respectively, informative and non-informative eigenvectors -- or a merged single-bulk -- where such distinction is lost, hampering generalization. We support this picture through a combination of analytical calculations and Monte Carlo (MC) simulations of the neural dynamics. 

\section{The unsupervised Hebbian rule}\label{sec:Definitions}
The Hopfield model is a network of binary spins, whose state we denote by $\bm \sigma = (\sigma_1,\ldots,\sigma_N)\in\{-1,+1\}^N$, for some $N\in\mathbb N$, updated as
\begin{equation} \label{eq:neurdyn}
    \bm\sigma^{(n+1)}=\text{sgn}\big(\bm J\cdot \bm\sigma^{(n)}\big).
\end{equation}
The interaction matrix $\bm J$ is designed to make the network able to perform retrieval tasks. More precisely, given $K$ binary vectors $\bm\xi^\mu$, with $\mu=1,\ldots,K$, interpreted as memories, we say that the network retrieves the pattern $\bm\xi^\mu$ 
if, by initializing the neuronal configuration $\bm{\sigma}^{(0)}$ ``close'' to the target pattern, the neuronal dynamics in Eq.~\eqref{eq:neurdyn} eventually converges to a fixed point reconstructing the stored pattern, i.e. $\bm{\sigma}^{(\infty)} = \bm\xi^\mu$. 
In analytical frameworks, memory entries are commonly assumed to be independently drawn from a Rademacher distribution
\begin{equation}\label{eq:stat_xi}
    P\left(\xi^\mu_i=\pm1\right)=\frac12.
\end{equation}
In this setting, a convenient choice for the coupling matrix is given by Hebb's rule
\begin{equation} \label{eq:J_H}
J^H_{ij}\doteq\frac1N\sum_{\mu=1}^K\xi^\mu_i\xi^\mu_j,\quad  J^H_{ii}=0.
\end{equation}
This ensures that each single stored memory is a fixed point as long as the initial configuration is not too far from the target pattern and the network load, defined as
\begin{equation}\label{eq:alpha}
    \alpha\doteq \lim_{N\to\infty}\frac KN,
\end{equation}
is sufficiently small.
In order to mitigate these limitations, several modifications of the Hebbian prescription have been proposed. Among them the following\footnote{This interaction matrix was originally introduced to mimic the interplay between awake and resting regimes, where the network is, respectively, exposed to new patterns and subjected to removal and consolidation mechanisms; following this inspiration, $t$ is also interpreted as ``sleeping time'' \cite{FAB-NN2019}.}
\begin{equation} \label{eq:J_D}
     J^D_{ij}\doteq\frac{1}{N}\sum_{\mu,\nu=1}^{K}\xi^\mu_i 
    \left(\frac{1+t}{\bm I+t \bm C}\right)_{\mu\nu}
    \xi^\nu_j,
    \quad J^D_{ii}=0, 
\end{equation}
with $t \in \mathbb R^{+}_0$ and $C_{\mu\nu}=\sum_i \xi_i^{\mu}\xi_i^{\nu}/N$, interpolates between Hebb's ($t=0$) and Kohonen's ($t \to \infty$) rules.  
Increasing $t$ in the definition of $\boldsymbol J^D$ reduces the stability of spurious configurations, namely mixtures of stored patterns whose retrieval is regarded as an error of the machine \cite{FAB-NN2019,agliari2024spectral}. 
\par
Moving from a setting where the designer has full access to the patterns $\{ \bm \xi^{\mu}\}_{\mu=1,...,K}$ to more realistic situations, we assume that the only available information consists of noisy realizations of the original patterns denoted by $\{\tilde{\bm{\xi}}^\mu_a\}^{\mu=1,...,K}_{a=1,...,M}$. Again, retaining a synthetic and controllable setting, we generate these accessible vectors as
\begin{equation}\label{eq:def_eta}
    \tilde{\xi}^\mu_{a,i}=\xi^\mu_i\chi^\mu_{a,i},
\end{equation}
with $P\left(\chi^\mu_{a,i}=\pm1\right)=\tfrac12({1\pm r})$, 
where the parameter $r\in(0,1]$ tunes the \emph{quality} of the sample. Hereafter, we will refer to the ground-truth patterns as \emph{archetypes} and to the noisy-realizations as \emph{examples}. Within this setting, $\operatorname{Cov}\big(\tilde\xi^\mu_{a,i}, \tilde \xi^\nu_{b,j}\big) =
\delta_{ij}\delta_{\mu\nu}
\left[\delta_{ab}+ r^2 (1-\delta_{ab})\right].$
Having in mind an unsupervised scenario, the label $\mu$ in these examples is latent \cite{AABD-NN2022} and it is thus convenient to relabel the examples by a multi-index $\ell = (\mu,a) \in \{1, ..., MK \}$. In this framework, the coupling matrices \eqref{eq:J_H} and \eqref{eq:J_D} are extended, respectively, as 
\begin{equation} \label{eq:J_H_r}
    \tilde{J}^H_{ij}=\frac{1}{N M}\sum_{\ell=1}^{M K} \tilde \xi^{\ell}_i 
    \tilde\xi^{\ell}_{j},
    \qquad \tilde{J}^H_{ii}=0,
\end{equation}
and
\begin{equation} \label{eq:J_D_r}
    \tilde{J}^D_{ij}=\frac{1}{N M}\sum_{\ell, \ell'=1}^{M K} \tilde \xi^{\ell}_i 
    \left(\frac{1+t}{\bm I+t \tilde{ \bm C}}\right)_{\ell \ell'}
    \tilde\xi^{\ell'}_{j},
    \qquad \tilde{J}^D_{ii}=0,
\end{equation}
with $\tilde C_{\ell \ell'}=\tfrac1 {NM}\sum_{i=1}^N \tilde \xi^\ell_{i}\tilde\xi^{\ell'}_{i}$ being the {\it example} correlation matrix. Here, the task consists in the reconstruction of the archetypes while the retrieval of a specific training example is interpreted as overfitting \cite{AAAF-NN2024}. Indeed, the coupling matrix \eqref{eq:J_D_r} can also be recovered as a minimizer of a squared-error loss, where $t$ acts as (the inverse of) a regularization parameter, so that large values of $t$ correspond to weak regularization, which in turn may expose the system to the risk of overfitting.

\begin{figure*}[t]
\centering
  \centering
  \includegraphics[width=.30\linewidth]{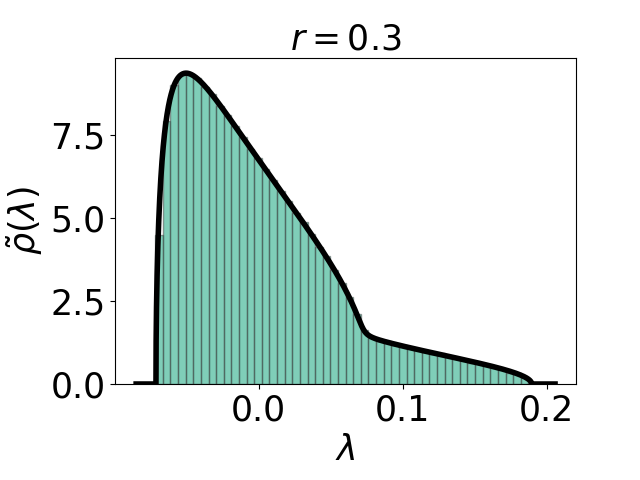}
\hspace{0cm}
  \centering
  \includegraphics[width=.30\linewidth]{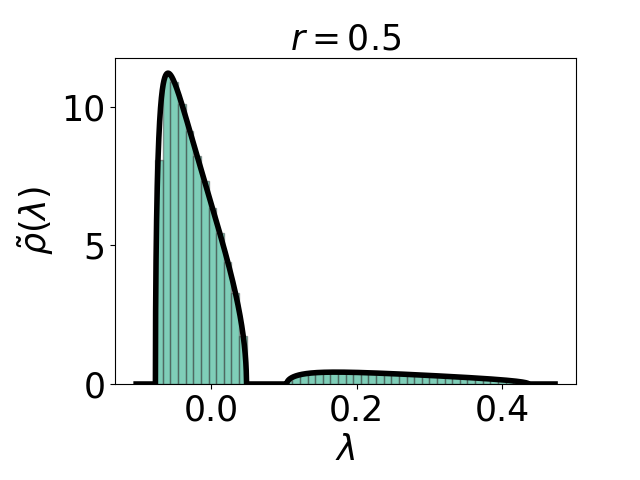}
\caption{Spectrum of the unsupervised model characterized by the interaction matrix $\boldsymbol{\tilde J}^H$ in Eq. \eqref{eq:J_H_r}, for $r=0.3$ (left) and $r=0.5$ (right). The histogram in green represents the data resulting from computing the eigenvalues of interaction matrices with $N=1000$ neurons across $50$ independent samples, while the solid black line shows the corresponding theoretical prediction. The load and number of examples were fixed at $\alpha=0.1$ and $M=50$, respectively.}
\label{fig:spectra}
\end{figure*}

\section{Spectral theory of the unsupervised Hebbian matrices}\label{sec:spec}
While a comprehensive statistical-mechanics picture of the emergent behavior is available for the models defined by \eqref{eq:J_H} and \eqref{eq:J_D}, an analogous theoretical understanding is still lacking for the models corresponding to \eqref{eq:J_H_r} and \eqref{eq:J_D_r}. In what follows, we derive the asymptotic spectral distribution for the regularized, unsupervised, Hebbian matrix $\tilde{\bm{J}}^{D}$ and infer from it how the quality $r$, the number of examples per class $M$ and the regularizer $t$ affect the performance of the network.
 
To this goal, it is technically more convenient to start with the unregularized case, where the diagonal constraint is relaxed; we denote it as $\tilde {\bm J}^{H'}$.\footnote{In general, we use primed variables to denote quantities associated with non-zero diagonal versions of the models. On the other hand, tilded variables indicate that there exists noise in the dataset as per Eq. \eqref{eq:def_eta}.}
Then, for a fixed realization of the dataset, we can write its empirical spectral measure as $ \frac 1N \sum_{\alpha} \delta_{\lambda, \tilde\lambda'_\alpha}$, with $\tilde\lambda'_\alpha$ being the generic eigenvalue. 
When $N\to\infty$, the empirical spectral measure is expected to converge in weak-${}^*$ topology to a deterministic limit 
\begin{equation}
    \label{eq:det_lim_rho}
    \tilde\rho'(\lambda)\doteq\lim_{N\to\infty} \mathbb E_{\boldsymbol  \xi}\frac 1N \sum_{\alpha } \delta_{\lambda,\tilde\lambda_\alpha'}.
\end{equation}
We follow the Edwards--Jones formalism \cite{edwards1976eigenvalue}, whereby the asymptotic distribution $\tilde\rho'(\lambda)$ is obtained by computing the quenched free energy of a suitable Gaussian spin-glass model:
$$
\tilde\rho'(\lambda) = -\frac 2\pi \lim_{N\to\infty} \lim_{\epsilon\to 0^+} \operatorname{Im}\frac{\partial}{\partial\lambda_\epsilon}\frac1N \mathbb E_{\boldsymbol  \xi} \log Z_N (\lambda_\epsilon),
$$
with $\lambda_{\epsilon}=\lambda-i\epsilon$, and the partition function being $Z_N (\lambda_\epsilon ) = \int _{\mathbb R^N}d \boldsymbol y \exp\Big(-\frac{i}{2}\boldsymbol y^T (\lambda_\epsilon \bm I-{\tilde{\bm J}^{H'}})\boldsymbol y\Big).$
%
%
%
The computation of the associated quenched free energy is performed by replica-trick, as detailed in the Supplementary Material (SM). 
Here, we directly state the
\begin{Thm*}\label{thm:1}
    Let $\mu_1 = \frac{1-r^2}{M}$ and $\mu_2 =r^2+\mu_1$. For any $\lambda\in \mathbb R$, define the functions
    \begin{align*}
    a(\lambda)&=\lambda\mu_1\mu_2\\
    b(\lambda)&=(\alpha M-1)\mu_1\mu_2-\lambda\left(\mu_1+\mu_2\right)\\
    c(\lambda)&=[1-\alpha(M-1)]\mu_1+(1-\alpha)\mu_2+\lambda,
\end{align*}
and $D(\lambda)= u(\lambda)^2+v(\lambda)^3$ with
\begin{align}
    u(\lambda)&=\frac{2b(\lambda)^3-9a(\lambda)b(\lambda)c(\lambda)-27a(\lambda)^2}{54a(\lambda)^3},\\
    v(\lambda)&=\frac{3a(\lambda)c(\lambda)-b(\lambda)^2}{9a(\lambda)^2}.
\end{align}
Then, the asymptotic spectral distribution 
of the unsupervised Hebbian ensemble \eqref{eq:det_lim_rho}
, in the thermodynamic limit with $\lim_{N\to\infty} K/N=\alpha \ge 0$ and $\alpha M\ge 1$, reads
\small{
\begin{equation*}
    \tilde\rho' (\lambda) = \frac{\sqrt{3}}{2\pi}\big(\sqrt[3]{\sqrt{D(\lambda)}+u(\lambda)}+\sqrt[3]{\sqrt{D(\lambda)}-u(\lambda)}\big)\bm 1_{D(\lambda)>0} .
\end{equation*}
}
\end{Thm*}
This statement provides an explicit expression for the limiting law in the full-rank regime, assuming convergence of the empirical spectral distribution as $N\to\infty$. 
In the low-rank case $\alpha M<1$, a $\delta$-peak at $\lambda =0$ with mass fraction $1-\alpha M$ appears, while non-zero eigenvalues are still described by the explicit expression of $\tilde\rho'(\lambda)$ given above.
Once the limiting law $\tilde \rho'(\lambda)$ is known, we can get its regularized version $\tilde\rho'_t(\lambda)$ by exploiting the bijective relation between the related families of eigenvalues $\{\tilde \lambda'_\alpha\}_\alpha$ and $\{\tilde \lambda'_\alpha(t)\}_\alpha$, reading \cite{agliari2024spectral}
\begin{equation}\label{eq:bij}
\tilde \lambda'_\alpha (t)=f_t(\tilde\lambda'_\alpha)= \frac{1+t}{1+t\tilde \lambda'_\alpha}\tilde\lambda'_\alpha.
\end{equation}
Thus, for $t>0$, the asymptotic spectral distribution is obtained as the pushforward measure of $\tilde \rho'$ by $f_t$, namely $\tilde\rho'_t(\lambda)=\tilde\rho'(f_t^{-1}(\lambda))\frac{df_t^{-1}(\lambda)}{d\lambda}.$ 

\begin{figure*}[t]
    \centering
    \includegraphics[width=0.5\textwidth]{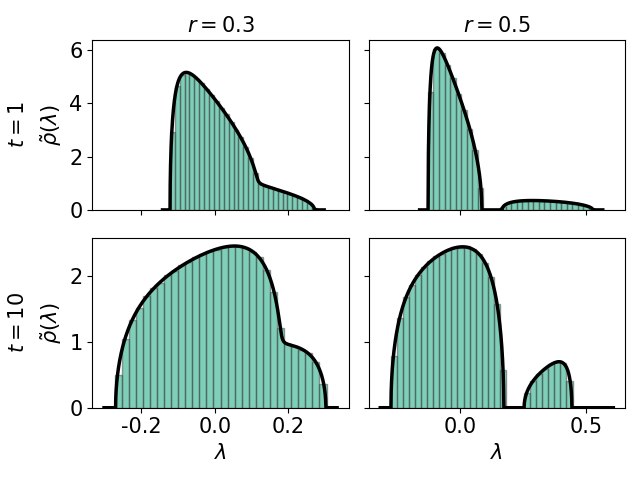}
\caption{Spectrum of the unsupervised regularized model characterized by the interaction matrix $\boldsymbol{\tilde J}^D(t)$ in Eq.~\eqref{eq:J_D_r}, for $r=0.3$ (left) and $r=0.5$ (right), and $t=1$ (above) and $t=10$ (below). The histogram in green represents the data resulting from computing the eigenvalues of interaction matrices with $N=1000$ neurons across $50$ independent samples, while the black solid line shows the corresponding theoretical prediction. The load and number of examples were fixed at $\alpha=0.1$ and $M=50$, respectively.}
\label{fig:spectra_dreaming}
\end{figure*}
%
\par
We now properly handle $\tilde \rho'$ and $\tilde \rho_t'$ to recover $\tilde \rho$ and $\tilde \rho_t$. For the former, we simply shift the asymptotic spectral distribution by a quantity $-\alpha$, since $\tilde{ \bm J}^{H'} = \tilde{ \bm J}^{H}+\alpha \bm I$. The agreement between this theoretical prediction and the numerics is reported in Fig. \ref{fig:spectra} for specific choices of $\alpha$, $M$ and $r$.
\begin{figure*}[t]
    \centering
    \includegraphics[width=0.6\textwidth]{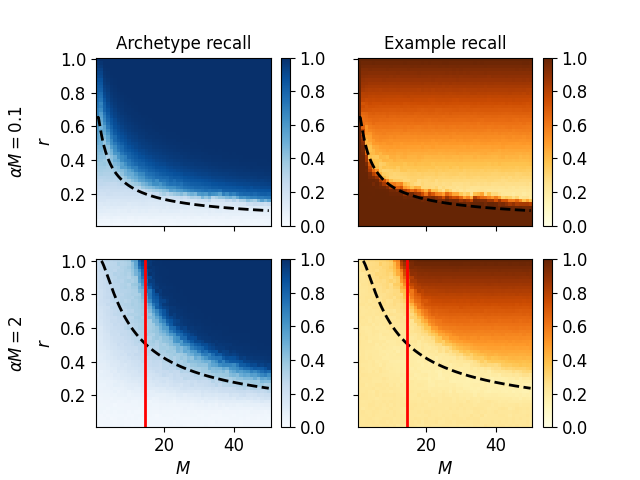}
    \caption{MC simulations for the unsupervised Hopfield model run until convergence for the recovery of archetypes (left) and examples (right) for various values of $r$ and $M$, in the low-rank (above) and full-rank (below) scenarios. The color maps show the archetype overlap ($m_{\bm \xi} = \frac1N \sum_{i=1}^N \sigma^{(\infty)}_i\xi^\mu_i$, left column) and the closest-example overlap ($m_{\tilde{\bm \xi }} = \frac1N \sum_{i=1}^N \sigma^{(\infty)}_i \tilde{\xi}_{a,i}^{\mu}$, right column) computed upon convergence to fixed points $\boldsymbol \sigma^{(\infty)}$, and averaged across $50$ samples. We used $N=1000$ and the starting state had a quality of $p=0.9$ compared to a stored example. The black dashed line marks the separation of the spectrum into two peaks (predicted theoretically by Eq.~\eqref{eq:spectral_gap_crit}), while the vertical red line marks the value $M$ yielding $\alpha = 0.138$, namely the Hopfield model critical load.}
    \label{fig:gen_rm}
\end{figure*}
Remarkably, depending on the combination of the control parameters, $\tilde{\rho}$ exhibits different behaviors (hereafter, unless otherwise specified, we assume the full-rank case $\alpha M\ge 1$). For low $M$ and $r$, 
it consists of a single continuous bulk of non-zero eigenvalues; conversely, for relatively large $M$ and/or $r$, i.e. for sufficiently informative datasets, the distribution consists of two separate bulks, the highest one carrying a fraction $\alpha$ of eigenvalues. In this two-bulk regime, the top empirical eigenspace is expected to have macroscopic overlap with the archetype subspace, while in the merged-bulk regime the archetype directions are not spectrally isolated.
\footnote{In the low-rank setting ($\alpha M<1$), $\textrm{Span}[\{ \tilde{\boldsymbol \xi}^{\ell}\}_{\ell=1,...,MK} ]$ does not cover the full $N$-dimensional space; the remaining subspace is orthogonal to both the patterns {\it and} the examples.} The support of the spectral density gets disconnected for the critical value of the dataset quality $r_c(\alpha,M)$ being the solution of the equation
\begin{equation}\label{eq:spectral_gap_crit}
 \alpha = \frac{(\mu_2-\mu_1)^2}{M \big(\sqrt[3]{(1-\tfrac1M)\mu_1^2}+\sqrt[3]{\tfrac1M \mu_2^2}\big)^3},   
\end{equation}
with $\mu_{1,2}$ as defined in the main result, see also \cite{burda_signalnoise}. 
Indeed, in the extreme case $r=1$ (where examples are identical to the archetypes) the lower bulk collapses to a Dirac $\delta$ at $\lambda=-\alpha$, and the whole distribution reproduces the shifted Marchenko-Pastur distribution at scale factor $\alpha$; in the opposite limit $r=0$, one again recovers the Marchenko-Pastur distribution, but with scale factor $\alpha M$, as we are storing each of the (now uncorrelated) examples independently. 
\par
A similar behavior is observed in the regularized case. Here, the correction to be implemented to $\tilde{\rho}_t'$ is not trivial as the diagonal entries in ${\tilde {\bm J}^{D'}}$ are not constant; yet they do self-average around the first moment $\bar \lambda \doteq \int \lambda {\tilde\rho'_t}(\lambda)d\lambda$, and one can show that the asymptotic spectral distributions of  $\tilde {\bm J}^{D'}$ and $\tilde{\bm J}^{D}+\bar \lambda \bm I$ are the same 
(see the SM for more details). 
Then, the limiting law for $\tilde {\bm J}^D(t)$ in Eq. \eqref{eq:J_D_r} is recovered as $\tilde\rho_t(\lambda)=\tilde\rho'_t(\lambda+\overline{\lambda})$.
%
%
%
Again, the theoretical predictions agree with the numerical estimates of the empirical distribution, see Fig. \ref{fig:spectra_dreaming}.

\section{Application of the theoretical results and numerical checks}\label{sec:MC}
The phenomenology traced so far for the spectral distribution of the unsupervised Hebbian matrices suggests that the separation of two components (which is a genuine dataset-dependent condition) can be leveraged to distinguish between ``signal'' (archetypes) and ``noise'' (any component orthogonal to the archetype span), as identifying the top $K$ eigenvectors of the coupling matrix allows one to recover information about the hidden archetypes. 
Furthermore, we can anticipate how the spectral properties of the coupling matrix impact the outcome of the neuronal dynamics \eqref{eq:neurdyn}: the dot product $\bm J \cdot \bm \sigma^{(n)}$ can be expanded according to the spectral decomposition theorem and, if the spectrum is split in two bulks (this is referred to as ``spectral gap'' from now on), the contributions stemming from the top eigenvalues, that are strongly correlated with archetypes, will prevail over those accounting for the intrinsic noise in the dataset. Thus, by iterating the process, we expect that the gap can have a beneficial effect on archetype recall. Conversely, when the peaks merge, the separation between dominant and subdominant eigenmodes diminishes, in such a way that signal and noise are no longer clearly disentangled; in this regime, the iterative dynamics tends to mix archetype-correlated and noise components, thereby impairing generalization.
\par
In this context, recalling Eq.~\eqref{eq:bij}, we stress that the regularization only induces a continuous deformation of the asymptotic spectral distribution, while its qualitative structure is left unchanged. 
Thus, tuning $t$ is not effective for isolating the components associated with hidden patterns. On the other hand, in the two-bulk regime, regularization can be leveraged to enhance the relative weight of the eigenspace associated with the signal, while suppressing the contribution stemming from the noisy bulk (see the right column of Fig.~\ref{fig:spectra_dreaming}).
%
We also recall that, for $t \to \infty$, the high-rank matrix $\tilde{\bm J}^D$ converges to the null matrix, hence an optimal range of $t$ is expected. 
%
In the following, we provide experimental evidence supporting this picture.

\begin{figure*}[t]
    \centering
    \includegraphics[width=0.7\textwidth]{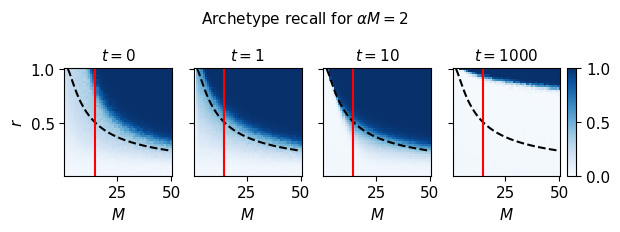}
    \caption{MC simulations run until convergence for the recovery of archetypes for various values of $r$, $M$ and $t$ in the full-rank scenario ($\alpha M=2$). The color maps report the value of the archetype overlap ($m_{\bm \xi} = \frac1N \sum_{i=1}^N \sigma^{(\infty)}_i\xi^\mu_i$) computed upon convergence to fixed points and averaged over $50$ independent samples. We used $N=1000$ and the starting state had a quality of $p=0.9$ compared to a stored example. The black dashed line marks the separation of the spectrum into two peaks (predicted theoretically), while the vertical red line marks the value $M$ for which $\alpha = 0.138$.}
    \label{fig:gen_rm_manyts}
\end{figure*}

%
\par
In Fig. \ref{fig:gen_rm}, we show a comparison between the theoretical predictions coming from spectral analysis and MC simulations for low rank ($\alpha M<1$, first row) and full rank ($\alpha M>1$, second row). 
In the numerical experiments in this section, we prepare the network very close to one of the stored examples\footnote{The initial configuration is generated according to $\sigma^{(0)}_i=\tilde{\xi}^\mu_{a,i}\phi_i$, with $P(\phi_i=\pm1)=(1+p)/2$ and $p=0.9$ for some fixed $\mu$ and $a$. } and run the dynamics \eqref{eq:neurdyn} until convergence;\footnote{Since the temperature is $0$, with parallel updating used here to run \eqref{eq:neurdyn}, convergence here always means convergence to either $1$ or $2$-cycles. In our case, the only $2$-cycles that are observed are oscillations between one state and its symmetric, and only occur for very high $t$, where retrieval is no longer possible.} then we compute the (normalized) overlap with the closest hidden archetype (left column) and the corresponding stored example (right column).
In the low-rank case, the separation of the peaks predicts the emergence of generalization (where the final overlap with the hidden archetype is close to $1$, dark blue regions in the left plots) as opposed to overfitting (where dynamics retrieves the stored examples instead, dark yellow regions in the right plots). In the high-rank regime, generalization is still possible, provided that the load is not too high, a reference value being $\alpha = 0.138$ (highlighted by the vertical lines in the plots), marking the onset of the ``catastrophic forgetting'' in the Hopfield model. As expected, for $r=1$, this is still a sensitive threshold above which reconstruction capabilities are lost, while for noisy datasets ($r<1$) it only provides an upper bound for affordable loads. Furthermore, overfitting becomes impossible to observe, since the number of stored examples per neuron $KM/N=\alpha M$ is also far above the critical load.\par
In the second experiment, we investigate the role of $t$ in enhancing generalization capabilities.
To do this, we focus again on the full-rank case $\alpha M>1$, as it  exhibits a lack of reconstruction performances within the admissible region suggested by spectral analysis.
We thus ran the same experiments as before with the coupling matrix $\tilde {\bm J}^D(t)$ (Eq. \ref{eq:J_D_r}) for different values of $t$, with the results presented in Fig. \ref{fig:gen_rm_manyts}. Remarkably, our results show that tuning $t$ enlarges the reconstruction region and saturates, for some optimal choice $t^*$, the threshold provided by spectral analysis. In this sense, the transition to a double-bulk character of the asymptotic spectral distribution provides a meaningful prediction for reconstruction capabilities of the model: as long as the spectral gap is present (i.e., for sufficiently informative datasets), $t$ can be tuned to preserve generalization even in the presence of a relatively large number of archetypes, a regime typically associated with detrimental interference among the stored patterns and the consequent emergence of a glassy behavior in the neuronal dynamics. This is consistent with the phenomenology observed in the model associated with $\boldsymbol J^D(t)$, where increasing $t$ is found to mitigate these harmful glassy effects \cite{FAB-NN2019,agliari2019dreaming}. 

\section{Designing the optimal coupling: a practical recipe} \label{sec:design}
%

We showed that $t$ can optimize the network performance. However, relating $t^*$
 to the spectral properties of the coupling matrix is generally nontrivial, since the nonlinear neuronal dynamics \eqref{eq:neurdyn} makes the fixed points difficult to control from spectral information alone. Nonetheless, we can try to write down a heuristic expression based on the experience collected so far. For this purpose, we conduct generalization experiments slightly above the threshold yielding the spectral gap:
\begin{equation}\label{eq:sample}
    r = r_c(\alpha,M)+0.05,
\end{equation}
where $r_c(\alpha,M)$ is obtained by {solving Eq. \eqref{eq:spectral_gap_crit} in terms of $r$}.\footnote{It is worth noting that, in the coordinates defined by Eq.~\eqref{eq:sample}, every order parameter changes as $M$ is varied.} The results of numerical simulations for fixed $\alpha M$ are shown in Fig. \ref{fig:gen_dream_levels} (upper row), together with the level sets for the difference between the maxima of the two spectral bulks (bottom row), for $\alpha M = 2$ (left) and $\alpha M=5$ (right), where we can observe a very strong resemblance between both sets of graphs.\footnote{We stress that, increasing $M$ for fixed $\alpha M$ and $r=r_c(\alpha,M)+0.05$ results in both lower quality and lower load.}
\begin{figure*}[t]
    \centering
        \centering
        \includegraphics[width=0.30\textwidth]{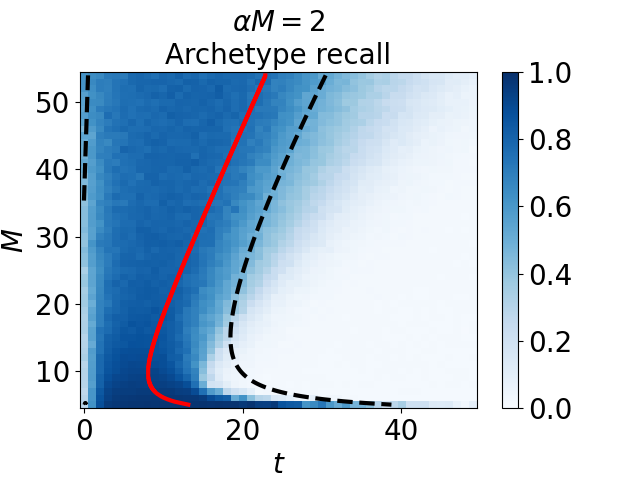}
    \hspace{0cm}
        \centering
        \includegraphics[width=0.30\textwidth]{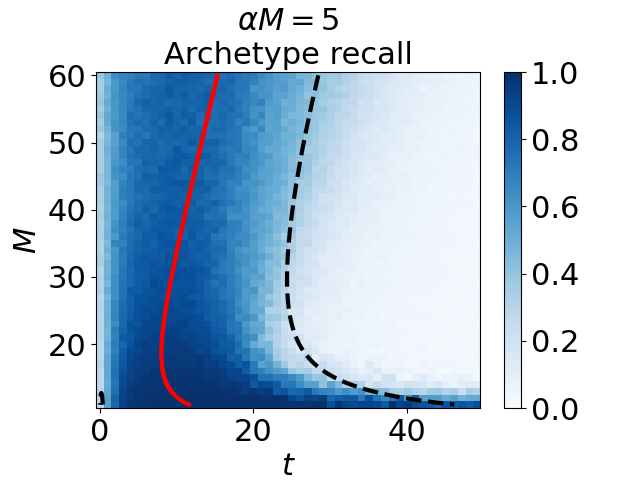}\\
    \vspace{0cm}
        \centering
        \includegraphics[width=0.30\textwidth]{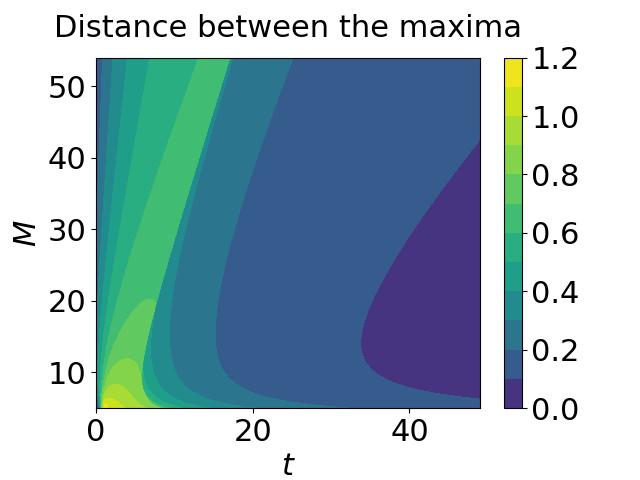}
    \hspace{0cm}
        \centering
        \includegraphics[width=0.30\textwidth]{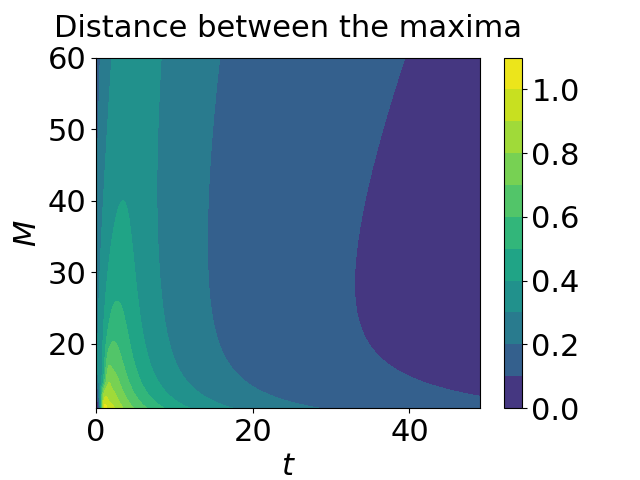}
    \caption{MC simulations comparing the generalization capabilities of the unsupervised regularized model with properties of the spectrum, for ranks $\alpha M=2$ (left) and $\alpha M=5$ (right). In the upper row, generalization experiments are conducted: the color maps report the value of the archetype overlap ($m_{\bm \xi} = \frac1N \sum_{i=1}^N \sigma^{(\infty)}_i\xi^\mu_i$) computed upon convergence to fixed points; these were run similarly to those shown in Figs. \ref{fig:gen_rm} and \ref{fig:gen_rm_manyts}, but for the points defined by equation \eqref{eq:sample}, and starting from new examples instead of close to stored ones. The final overlaps relative to the corresponding archetypes are then measured and averaged over $50$ independent samples for each point. The black dashed line corresponds to one level set of the distance between the maxima of the peaks in the spectrum ($0.17$ on the left and $0.13$ on the right). The red line corresponds to the maxima in $t$ of the function $\phi$ defined in \eqref{eq:phi} for each value of $M$. On the bottom, the level sets of the distance between the maxima of the peaks are shown. 
    }
    \label{fig:gen_dream_levels}
\end{figure*}

To establish a feasible criterion for determining the optimal regularization $t^*$, we now impose the additional requirement that the maxima of the left peak be on the right of its center of mass: this condition pushes most of the eigenvalues belonging to the noisy bulk closer to $0$, so that they contribute less in the mode expansion. 
Thus, rather than simply maximizing the distance between the bulks, we instead maximize
\begin{equation}\label{eq:phi}
    \phi_{\alpha, r, M}(t)=\left(\lambda_{\text{max},1}-\lambda_{\text{CM},1}\right)\left(\lambda_{\text{max},2}-\lambda_{\text{max},1}\right),
\end{equation}
where the indices $1$ and $2$ represent the left and right peaks, respectively. This effectively rules out values for which $\lambda_{\text{max},1}<\lambda_{\text{CM},1}$, since $\phi$ is negative, while otherwise keeping an interplay between both distances. The maximum of $\phi$ for each $M$ is drawn in red in Fig. \ref{fig:gen_dream_levels}, where we see that it lies well within the generalization region.

\section{Conclusions and outlook}\label{sec:conclusions}
%
%
Unlike the supervised setting, where the limiting free-energy density can be computed using standard tools from statistical mechanics \cite{AAKBA-EPL2023}, the unsupervised case remains considerably more challenging. In fact, intrinsic correlations among the stored patterns hinder a comprehensive analytical treatment and make it difficult to theoretically predict whether the network can successfully retrieve the ground-truth archetypes underlying a noisy dataset. 
On the other hand, spectral methods have long played a central role in statistical inference, spike-detection, and neural-network theory, where eigenvalue distributions often reveal the separation between informative directions and noise subspaces \cite{BBP,BenaychGeorges2011,Johnstone2001,BBP-Vivo,Valigi_2025,Ribeiro-2025}. Analogous perspectives have recently proved fruitful also in associative memories, where the algebraic properties of the coupling matrix can anticipate retrieval performance and the emergence of spurious states, see e.g., \cite{kanaka-PRL2006,AABF-JPA2019,Haiping-PRE2021,martin2021implicit,agliari2024spectral,ADF2026_BeingEmpty,benedetti2023eigenvector}.\par
In this work, we show how spectral methods may be used to characterize the behavior of the unsupervised Hopfield model. First, by computing the limiting spectral density of the coupling matrix, we find that it displays a clear transition: for sufficiently informative, low-entropy datasets, the density function attains a double-bulk structure, where one spectral component captures the archetypal directions of the data, while the other accounts for noise. This separation effectively captures the possibility of retrieving the underlying dataset and thus functions as a necessary condition for generalization. In other words, when this separation disappears, retrieval of the underlying patterns is lost.
We also showed, via numerical simulations, that this condition can possibly be made sufficient by tuning an appropriate regularization parameter. We then provided a heuristic formula for the optimal choice of this parameter, ensuring a high-quality generalization, thus making our spectral criteria an accurate predictor of the retrieval capabilities of the network.\par
Beyond this specific model, our results suggest that learning, memory, and generalization can be diagnosed geometrically through spectra. Extending these ideas to benchmark datasets and dense associative architectures offers a promising direction for future work.

\section{Acknowledgements}
E.A. and A.F. acknowledge support from PNRR MUR project PE0000013-FAIR.
P.D.M. acknowledges financial support from the PNRR MUR Project B53C23002010006.
P.V. acknowledges support from UKRI FLF Scheme (No. MR/X023028/1).


\newpage
\clearpage
\onecolumn

\begin{center}
{\Large \bfseries Supplementary Material}\\[1em]
{\large \maintitle}
\end{center}

\vspace{1cm}


\section{Derivation of the asymptotic spectral measure of the unsupervised Hebbian ensemble}
Our goal here is to compute the asymptotic spectrum of the interaction matrix of the unsupervised Hopfield model given in \eqref{eq:J_D_r}, with the statistics of the dataset given in equations \cref{eq:stat_xi,eq:def_eta}. To achieve this, we resort to the Edwards-Jones formula \cite{VivoRMT,Vivo_replicas,pehlevan_wishartreplicas}, which gives the spectrum of an $N\times N$ random matrix $\bm X$ as
\begin{equation}\label{eq:EJ}
    \rho_{N}(\lambda)=-\frac2{\pi N}\lim_{\epsilon\to0^+}\Im\frac\partial{\partial\lambda_\epsilon}\left<\log Z_N(\lambda_\epsilon)\right>_{\bm X},
\end{equation}
with $ \lambda_\epsilon\doteq\lambda-i\epsilon$ and
\begin{equation}
    Z_N(\lambda_\epsilon)\doteq\int_{\mathbb{R}^N}d\bm{y}\exp\left[-\frac{i}{2}\bm{y}^T\left(\lambda_\epsilon\mathds{1}-\bm X\right)\bm{y}\right].
\end{equation}
Since we are interested in the properties of the model in the thermodynamic limit $N\to\infty$, we shall focus on the limiting spectral density
\begin{equation}
    \rho(\lambda)\doteq\lim_{N\to\infty}\rho_N(\lambda).
\end{equation}
To tackle \eqref{eq:EJ}, we employ the replica trick \cite{Vivo_replicas,pehlevan_wishartreplicas}, where we first compute $\left<Z_N^n\right>$ for $n\in\mathbb N$ and then use
\begin{equation}\label{eq:RT}
    \left<\log Z_N\right>_{\bm X}=\lim_{n\to0}\frac{\log\left<Z_N^n\right>_{\bm X}}{n},
\end{equation}
assuming that analytic continuation holds.\footnote{In practice this is not enough; we will also need to assume that the limits $n\to0$ and $N\to\infty$ commute. This is far from obvious, but standard in replica computations and, as we shall see, the results perfectly match experimental evidence.} In our setting (using the notation adopted in the main text), $\bm X\equiv \tilde{\bm J}^{H'}$ and $\rho \equiv \tilde \rho'$, the reason being the replica computations are slightly more convenient keeping the diagonal in the coupling matrix defined in Eq. \eqref{eq:J_D_r}. The average $\langle \cdot \rangle_{\bm X}$ thus corresponds to the expectation w.r.t. the realization of the examples according to the definition in Eq. \eqref{eq:def_eta}. Since
\begin{equation}\label{eq:J_unsup_d}
    {\tilde{\bm J}^{H'}}= \tilde {\bm J}^H+\frac KN\bm I,
\end{equation}
removing diagonal contributions simply shifts the spectral density by $\alpha$, namely $\tilde\rho(\lambda)=\tilde\rho'(\lambda+\alpha)$. Proceeding now with the computation, we have
\begin{align*}
\left<Z_N^n\right>_{\tilde{\bm \xi}}&=\langle
\int_{\mathbb{R}^{Nn}}\big(\prod_{s=1}^nd\bm{y}_s\big)\exp\Big(-\frac{i}{2}\sum_{s=1}^n\bm{y}_s^T(\lambda\bm{I}-{\tilde{\bm J}^{H'}})\bm{y}_s\Big)\rangle_{\tilde{\bm \xi}}=\\
&=\int d\bm yd\mu_{\bm\chi}d\mu_{\bm\xi}\exp\Big(-\frac{i}{2}\sum_{s=1}^n\sum_{i=1}^N \lambda\left(y_{is}\right)^2\Big)\exp\Big(\frac{i}{2NM}\sum_{s=1}^n\sum_{i,j=1}^N \sum_{\mu,a=1}^{K,M}y_{is}\xi^\mu_i\xi^\mu_j\chi^\mu_{ia}\chi^\mu_{ja}y_{js}\Big).
\end{align*}
To deal with the expectation w.r.t. the disorder $\tilde{\bm \xi}$, we use the generalization of the Hubbard-Stratonovich transform
\begin{equation}\label{eq:HS}
e^{\frac{i}{2} a x^2}
= \frac{e^{i \pi / 4}}{\sqrt{2 \pi }}
\lim_{\zeta\downarrow0}\int_{-\infty}^{\infty}  dz\,
\exp\!\Big(-\frac{i+\zeta}{2 } z^2 + i\,\sqrt{a} z x \Big) .
\end{equation}
Also denoting
\begin{eqnarray}
    \mathcal D\bm y&=&d\bm y\exp\Big(-\frac{i}{2}\sum_{s=1}^n\sum_{i=1}^N \lambda\left(y_{is}\right)^2\Big),\\
     \mathcal{D}\bm z(\zeta)&=&\prod_{s,a,\mu=1}^{n,M,K}\frac{e^{i \pi / 4}}{\sqrt{2 \pi }} dz^\mu_{as}\exp\!\Big( -\frac{i+\zeta}{2 } (z^\mu_{as})^2 \Big)
\end{eqnarray}
we get
\begin{align*}
    \left<Z_N^n\right>_{\tilde{\bm \xi}}&=\int \mathcal D\bm yd\mu_{\bm\chi}d\mu_{\bm\xi}\prod_{s,\mu,a=1}^{n,K,M}\exp\Big(\frac{i}{2NM}\big(\sum_{i=1}^N y_{is}\xi^\mu_i\chi^\mu_{ia}\big)^2\Big)=\\
    &=\int \mathcal D\bm yd\mu_{\bm\chi}d\mu_{\bm\xi}\lim_{\zeta\downarrow0}\prod_{s,\mu,a=1}^{n,K,M}\int\frac{e^{i \pi / 4}}{\sqrt{2 \pi }} dz^\mu_{as}\exp\!\Big( -\frac{i+\zeta}{2 } (z^\mu_{as})^2 + \frac{i}{\sqrt{NM}}  z^\mu_{as}\sum_{i=1}^N y_{is}\xi^\mu_i\chi^\mu_{ia} \Big)=\\
    \nonumber&=
    \lim_{\zeta\downarrow0}\int \mathcal D\bm yd\mu_{\bm\xi}\mathcal{D}\bm z(\zeta)\prod_{i,\mu,a=1}^{N,K,M}\Big(\frac{1+r}{2}\exp\!\big(\frac{i}{\sqrt{NM}}  \sum_{s=1}^n y_{is}z^\mu_{as}\xi^\mu_i \big)+\frac{1-r}{2}\exp\!\big(-\frac{i}{\sqrt{NM}}  \sum_{s=1}^n y_{is}z^\mu_{as}\xi^\mu_i\big)\Big)=\\
    &= \lim_{\zeta\downarrow0}\int \mathcal D\bm yd\mu_{\bm\xi}\mathcal{D}\bm z(\zeta)\prod_{i,\mu,a=1}^{N,K,M}\exp\Big(\log\cos\!\big(\frac{1}{\sqrt{NM}}  \sum_{s=1}^n y_{is}z^\mu_{as}\xi^\mu_i \big)+\log\big(1+ir\tan\!\big(\frac{1}{\sqrt{NM}}  \sum_{s=1}^n y_{is}z^\mu_{as}\xi^\mu_i\big)\big)\Big),
\end{align*}
where we averaged w.r.t. the dataset noise $\bm \chi$. Expanding now the above expression up to second order and perform the $\bm \xi$ disorder, we get 
\small{
\begin{align}
    \nonumber\left<Z_N^n\right>_{\tilde{\bm \xi}}&\approx\lim_{\zeta\downarrow0}\int \mathcal D\bm y\mathcal{D}\bm z(\zeta)\prod_{i,\mu=1}^{N,K}\Big<\exp\Big(-\frac{1-r^2}{2NM}  \sum_{s,\ell=1}^n\sum_{a=1}^M y_{is}y_{i\ell}z^\mu_{as}z^\mu_{a\ell} +\frac{ir}{\sqrt{NM}}  \sum_{s,a=1}^{n,M} y_{is}z^\mu_{as}\xi^\mu_i\Big)\Big>_{\bm\xi}=\\
    \nonumber&=\lim_{\zeta\downarrow0}\int \mathcal D\bm y\mathcal{D}\bm z(\zeta)\prod_{i,\mu=1}^{N,K}\exp\Big(-\frac{1-r^2}{2NM}  \sum_{s,\ell=1}^n\sum_{a=1}^M y_{is}y_{i\ell}z^\mu_{as}z^\mu_{a\ell} +\log\cos\big(\frac{r}{\sqrt{NM}}  \sum_{s,a=1}^{n,M} y_{is}z^\mu_{as}\big)\Big),
\end{align}
}
with $\approx$ meant as equality up to negligible contributions in the thermodynamic limit. 
Since $M$ is fixed, we can again expand the last contribution as
\begin{align}
    \nonumber\log\cos\big(\frac{r}{\sqrt{NM}}  \sum_{s,a=1}^{n,M} y_{is}z^\mu_{as}\big)&=-\frac{r^2}{2NM}\sum_{s,\ell=1}^n\sum_{a,b=1}^My_{is}y_{i\ell}z^\mu_{as}z^{\mu}_{b\ell}+\mathcal{O}(N^{-2}).
\end{align}
Hence, again dropping non-leading contributions yields
\begin{equation}
    \left<Z_N^n\right>_{\tilde{\bm \xi}}\approx\lim_{\zeta\downarrow0}\int \mathcal D\bm y\mathcal{D}\bm z(\zeta)\exp\Big( -\frac{1}{2N} \Tr\,\bm y^T\bm z^T\bm A\bm z\bm y\Big)
\end{equation}
with $\bm A = \frac{1-r^2}{M}\bm I_M +\frac{r^2}{M} \bm 1_M \bm 1_M^T$ -- where $\bm 1_M$ is the vector of ones -- with
eigenvalues
\begin{align}
    \label{eq:mu1}\mu_1&=\frac{1-r^2}{M},\\
    \label{eq:mu2}\mu_2&=r^2+\frac{1-r^2}{M},
\end{align}
with multiplicities $M-1$ and $1$ respectively. We can therefore diagonalize the $\bm A$ matrix with an orthogonal transformation, namely $\bm A=\bm P\bm D\bm P^T$
where $\bm D$ is the diagonal matrix with $D_{aa}=\mu_1$, for $a=1,\ldots,M-1$ and $D_{MM}=\mu_2$. The change of variables $\bm u^\mu=\bm P^T\bm z^\mu$ does not affect the Gaussian measure, then we get
\begin{equation}\label{eq:PFdiag}
    \left<Z_N^n\right>_{\tilde{\bm \xi}}\approx\lim_{\zeta\downarrow0}\int \mathcal D\bm y\mathcal{D}\bm u(\zeta)\exp\Big( -\frac{1}{2N}  \Tr \, \bm y^T\bm u^T\bm D\bm u\bm y\Big).
\end{equation}
Now following a process analogous to \cite{pehlevan_wishartreplicas}, we introduce the order parameter
\begin{equation}\label{eq:op}
    \bm X\doteq \frac iN\bm y^T\bm y,
\end{equation}
via
\begin{equation}\label{eq:deltaID}
    1=\int \frac{d\bm Xd\hat{\bm X}}{(4\pi i/N)^{n(n+1)/2}}\exp\Big(-\frac{N}2\Tr\bm X\hat{\bm X}+\frac{i}2\sum_{s,\ell=1}^n\sum_{i=1}^N\hat{\bm X}^{s\ell}y_{is}y_{i\ell}\Big).
\end{equation}
Plugging it into the partition function \eqref{eq:PFdiag} yields
\begin{align}
    \nonumber\left<Z_N^n\right>_{\tilde{\bm \xi}}&\approx\lim_{\zeta\downarrow0}\int\frac{d\bm Xd\hat{\bm X}\mathcal D\bm y\mathcal{D}\bm u(\zeta)}{(4\pi i/N)^{n(n+1)/2}}\exp\Big(-\frac{N}2\Tr\bm X\hat{\bm X}+\frac{i}2\sum_{s,\ell=1}^n\sum_{i=1}^N\hat{\bm X}^{s\ell}y_{is}y_{i\ell}\Big)\\
    &\times\prod_{\mu=1}^K\exp\Big(\frac{i\mu_1}{2}  \sum_{s,\ell=1}^n\sum_{a=1}^{M-1} X_{s\ell}u^\mu_{as}u^\mu_{a\ell}+\frac{i\mu_2}{2}  \sum_{s,\ell=1}^nX_{s\ell}u^\mu_{M,s}u^\mu_{M,\ell}\Big).
\end{align}
We can now integrate w.r.t. $\bm y$ and $\bm u$. For the first integral, we have
\begin{align}
\int\mathcal{D}\bm y\exp\Big(\frac{i}2\sum_{s,\ell=1}^n\sum_{i=1}^N\hat{\bm X}^{s\ell}y_{is}y_{i\ell}\Big)
    \nonumber=\int d\bm y\exp\Big(-\frac{i}2\sum_{s,\ell=1}^n\sum_{i=1}^N(\lambda\delta_{s\ell}-\hat{\bm X}_{s\ell})y_{is}y_{i\ell}\Big)=i^{-n/2}\operatorname{det}^{-1/2}(\lambda\bm I-\hat{\bm X}),
\end{align}
while
\begin{align*}
&\lim_{\zeta\downarrow0}\int\mathcal{D}u(\zeta)\prod_{\mu=1}^K\exp\Big( -\frac{\mu_1}{2}  \sum_{s,\ell=1}^n\sum_{a=1}^{M-1} X_{s\ell}u^\mu_{as}u^\mu_{a\ell}-\frac{\mu_2}{2}  \sum_{s,\ell=1}^nX_{s\ell}u^\mu_{M,s}u^\mu_{M,\ell}\Big)=\\
    =&\lim_{\zeta\downarrow0}\prod_{a,\mu=1}^{M-1,K}\int\prod_{s=1}^n\frac{e^{i \pi / 4}}{\sqrt{2 \pi }} du^\mu_{as}\exp\!\Big(\frac1{2}  \sum_{s,\ell=1}^n \left((i+\zeta)\delta_{s\ell}-i\mu_1X_{s\ell}\right)u^\mu_{as}u^\mu_{a\ell}\Big)\\
    \times&\lim_{\zeta\downarrow0}\prod_{\mu=1}^K\int\prod_{s=1}^n\frac{e^{i \pi / 4}}{\sqrt{2 \pi }} du^\mu_s\exp\!\Big(\frac1{2}  \sum_{s,\ell=1}^n \left((i+\zeta)\delta_{s\ell}-i\mu_2X_{s\ell}\right)u^\mu_su^\mu_\ell\Big)\\
    =&\det\left(\mathds1_n-\mu_1\bm X\right)^{-(M-1)K/2}\det\left(\mathds1_n-\mu_2\bm X\right)^{-K/2}
\end{align*}
Thus, dropping unessential volume factors, we get
\begin{equation}
    \left<Z_N^n\right>_{\tilde{\bm \xi}}\propto\int d\bm Xd\hat{\bm X}\exp\Big(-\frac{nN}{2}S_n(\bm X,\hat{\bm X};\lambda)+\mathcal{O}(1)\Big),
\end{equation}
with
\begin{align}
    \nonumber nS_n(\bm X,\hat{\bm X};\lambda)\doteq&\Tr\bm X\hat{\bm X}+\log\operatorname{det}(\lambda\bm I-\hat{\bm X})+\alpha(M-1)\log\det(\bm I-\mu_1\bm X)+\alpha\log\det(\bm I-\mu_2\bm X).
\end{align}
To apply saddle-point approximation in the limit $N\to\infty$, we now extremize $S_n$ with respect to the order parameters. Our replica-symmetric ansatz is
\begin{align}
    \label{eq:rs1}\bm X&=q\mathds1_n+c\bm 1_n\bm1_n^\intercal,\\
    \label{eq:rs2}\hat{\bm X}&=\hat{q}\mathds1_n+\hat{c}\bm 1_n\bm1_n^\intercal,
\end{align}
where $\bm1_n$ and $\bm 1_n\bm1_n^\intercal$ respectively denote the column vector of dimension $n$ and the $n\times n$ matrix of ones. We can then compute
\begin{equation}\label{eq:matprod}
    \bm X\hat{\bm X}=q\hat q\mathds1_n+(q\hat c+\hat qc)\bm 1_n\bm1_n^\intercal+nc\hat c\bm 1_n\bm1_n^\intercal,
\end{equation}
and thus, up to first order in $n$,
\begin{equation}\label{eq:matprodtr}
    \frac1n\Tr \bm X\hat{\bm X}\approx q\hat q+q\hat c+\hat q c.
\end{equation}
To compute the other terms, we use the matrix determinant lemma:
\begin{equation}
    \det\left(\bm A+\bm u\bm v^\intercal\right)=\left(1+\bm v^\intercal\bm A^{-1}\bm u\right)\det\bm A.
\end{equation}
Thus, we get, for $k=1,2$,
\begin{align}
    \frac1 n\log\det(\mathds1_n-\mu_k\bm X)&=-\frac {c\mu_k}{1-\mu_kq}+\log\left[1-\mu_kq\right]+\mathcal{O}(n),\\
    \frac1n\log\operatorname{det}(\lambda\mathds1_n-\hat{\bm X})&=-\frac{\hat c}{\lambda-\hat q}+\log[\lambda-\hat q]+\mathcal{O}(n).
\end{align}
Hence $S_n(\bm X,\hat{\bm X};\lambda)= S(q,\hat q, c, \hat c;\lambda)+\mathcal{O}(n)$, with
\begin{align}
    \nonumber S(q,\hat q, c, \hat c;\lambda)&= q\hat q+q\hat c+\hat q c-\frac{\alpha(M-1)\mu_1c}{1-\mu_1q}+\alpha(M-1)\log\left[1-\mu_1q\right]\\
   & -\frac{\alpha\mu_2 c}{1-\mu_2q}+\alpha\log\left[1-\mu_2q\right]-\frac{\hat c}{\lambda-\hat q}+\log[\lambda-\hat q].
\end{align}

The corresponding extrema equations are then
\begin{align}
    \frac{\partial S}{\partial q}=0\iff&\hat q^*+\hat c^*-\frac {\alpha(M-1)\mu_1^2c^*}{(1-\mu_1q^*)^2}-\frac{\alpha(M-1)\mu_1}{1-\mu_1q^*}-\frac {\alpha\mu_2^2 c^*}{(1-\mu_2q^*)^2}-\frac{\alpha\mu_2}{1-\mu_2q^*}=0\label{eq:scq_noise}\\
    \label{eq:scqh_noise}\frac{\partial S}{\partial\hat q}=0\iff& q^*+c^*-\frac{\hat c^*}{(\lambda-\hat q^*)^2}-\frac1{\lambda-\hat q^*}=0\\
    \label{eq:scc_noise}\frac{\partial S}{\partial c}=0\iff&\hat q^*-\frac{\alpha(M-1)\mu_1}{1 - \mu_1q^*}-\frac{\alpha\mu_2}{1-\mu_2q^*}=0\\
    \label{eq:scch_noise}\frac{\partial S}{\partial\hat c}=0\iff& q^*-\frac1{\lambda-\hat q^*}=0
\end{align}
where as
\begin{equation}\label{eq:gen_DS}
    \frac{\partial S}{\partial\lambda}=\frac1{\lambda-\hat q}+\frac{\hat c}{(\lambda - \hat q)^2}.
\end{equation}
However, \eqref{eq:scq_noise} and \eqref{eq:scch_noise} together imply that
\begin{equation}
    c^*=\hat c^*=0,
\end{equation}
and thus \eqref{eq:gen_DS} becomes
\begin{equation}
    \frac{\partial S}{\partial\lambda}=\frac1{\lambda-\hat q}=q^*.
\end{equation}
Furthermore, equation \eqref{eq:scch_noise} can be re-written as
\begin{equation}\label{eq:solqhq_noise}
    \hat q^*=\frac{\lambda q^*-1}{q^*},
\end{equation}
and replacing it into \eqref{eq:scc_noise} gives the cubic equation 
\begin{equation}\label{eq:cubic_app}
    a\left(q^*\right)^3+b\left(q^*\right)^2+cq^*+d=0,
\end{equation}
where we defined\footnote{Note that by sending $\mu_1\to0$ and $\mu_2\to1$ (corresponding to the limit $r\to1$), one gets instead a quadratic equation, which leads to the Marchenko-Pastur distribution of the no-noise Hopfield model \cite{MP,AABF-JPA2019,agliari2024spectral}.}
\begin{align}
    a&=\lambda\mu_1\mu_2,\\
    b&=(\alpha M-1)\mu_1\mu_2-\lambda\left(\mu_1+\mu_2\right),\\
    c&=(1-\alpha(M-1))\mu_1+(1-\alpha)\mu_2+\lambda,\\
    d&=-1.
\end{align}
Defining
\begin{align}
    u&=\frac{2b^3-9abc+27a^2d}{54a^3}\\
    v&=\frac{3ac-b^2}{9a^2}
\end{align}
and the discriminant $D\coloneqq u^2+v^3$, we get
\begin{equation}
    \tilde\rho'(\lambda) =
    \begin{cases}
        \frac{\sqrt{3}}{2\pi}\left(\sqrt[3]{\sqrt{D(\lambda)}+u}+\sqrt[3]{\sqrt{D(\lambda)}-u}\right), &\quad  \text{if } D(\lambda)>0, \\
        0,      & \quad\text{otherwise},
    \end{cases}
\end{equation}
which is the main result in the main text. This spectral density was also studied in \cite{burda_signalnoise}, where the condition \eqref{eq:spectral_gap_crit} for the separation of the peaks was computed.\footnote{To get contact with the notation used in \cite{burda_signalnoise}, note that our two components have relative weights $p_1={(M-1)}/{M}$ and $p_2=1/M$.}

\section{Equivalence of the spectral measures}
In this appendix, we provide a proof for the equivalence of the limiting spectral measures $\tilde \rho_t (\lambda)$ and $\tilde \rho'_t(\lambda+\bar \lambda)$. We start by focusing on the quantity
$$
\Delta(\bar\lambda)= \frac1N\lVert {\tilde{\bm J}}^{D'}-({\tilde{\bm J}}^{D}+\bar\lambda\bm I)\lVert_F^2= \frac1N \lVert \bm D_N - \bar\lambda \bm I\lVert_F^2,
$$
where we denoted with $\bm D_N$ the diagonal of ${\tilde{\bm J}}^{D'}$, i.e. ${\tilde{\bm J}}^{D'}= {\tilde{\bm J}}^{D}+\bm D_N$, and $\lVert \bm A \lVert _F= \sum_{i,j} A_{ij}^2$ is the Frobenius norm. Notice that $\Delta(\bar\lambda)$ can be regarded as the asymptotic minimal deviation of $\bm{D}_N$ from the diagonal behavior. Indeed, calling $c\in \mathbb R$ and
\begin{equation}
    \Delta (c) = \frac1N \lVert \bm D_N - c \bm I\lVert_F^2 = \frac1N \sum_{i=1}^N (\tilde J^{D'}_{ii} -c)^2,
\end{equation}
by the principle of least square errors and convergence of first moment, we have
\begin{equation}
    \bar c_N = \text{argmin}_{c\in \mathbb R}\Delta(c) = \frac1N \sum_{i=1}^N \tilde J^{D'}_{ii} = \frac1N \text{Tr}\tilde{\bm  J}^{D'}\to \bar \lambda = \int \lambda \tilde \rho_t'(\lambda)d\lambda.
\end{equation}
Then $\Delta (\bar c_N)\le\Delta(\bar \lambda)\le\Delta (\bar c_N) +(\bar \lambda-\bar c_N)^2$, so that $\lim_N\Delta (\bar c_N) =\lim_N\Delta(\bar \lambda)$. Thus, we can directly focus on the matrix ${\tilde{\bm J}}^{D}+\bar\lambda\bm I$ as the ansatz for the asymptotic behavior of ${\tilde{\bm J}}^{D'}$.
\par\medskip
Recall that the solution of the matrix ODE $(1+t)\dot{\bm J} =\bm J-\bm J^2 $ with $\bm J(0) = \tilde{\bm J}^H$ can be cast in the form \cite{FAB-NN2019}
\begin{equation}
    \bm J^{D}(t) = ({1+t}) \tilde{\bm J}^H \frac{1}{\bm I +t \tilde{\bm J}^H}.
\end{equation}
This means that the regularized coupling matrix $\tilde{\bm J}^{D'}$ can be expressed in terms of the resolvent matrix of $\tilde{\bm J}^{H'}$, by using the identity $\bm A \bm G_{\bm A}(z) =\bm I +z \bm G_{\bm A}(z) $ holding for all $\bm A$. Indeed, we have

\begin{equation}
    \label{eq:app2_1}
    \tilde{\bm J}^{D'}= \frac{1+t}{t}\Big(\bm I - \frac1t\bm G_{\tilde{\bm J}^{H'}}\Big(-\frac1t\Big)\Big).
\end{equation}
This implies a strict relation between the trace of $\bm J^{D'}$ and the Stieltjes transform of $\bm J^{H'}$:
\begin{equation}
    \frac1N \Tr\tilde{\bm J}^{D'}= \frac{1+t}{t}\Big( 1-\frac1t m_{\tilde{\bm J}^{H'}}(-t^{-1})\Big),
\end{equation}
with $m_{\bm A}(z) = \frac1N \Tr(\bm A-z\bm I)^{-1}.$
Assuming the convergence of the empirical spectral distribution, this means (taking the $N\to\infty$ limit):
\begin{equation}
    \bar \lambda = \frac{1+t}{t}\Big( 1-\int \frac{\tilde \rho'(\lambda)d\lambda}{1+t\lambda}\Big).
\end{equation}
Also, from Eq. \eqref{eq:app2_1} it follows that
\begin{equation}
    \Delta(\bar c_N)=\left(\frac{1+t}{t^2}\right)^2\frac1N \sum_i(m_{\tilde{\bm J}^{H'}}(-t^{-1})- G_{\tilde{\bm J}^{H'},ii}(-t^{-1}))^2.
\end{equation}
\begin{Lem}\label{lem:app_lem1}
    For all $z\in \mathbb C\backslash \mathbb R$, the following inequality holds:
\begin{equation}
    \vert m_{{\tilde{\bm J}}^{D'}}(z)-m_{{\tilde{\bm J}}^{D}+\bar \lambda \bm I}(z)\vert\le \frac{\sqrt{ \Delta(\bar\lambda)}}{(\mathfrak{Im}(z))^2}.
\end{equation}
\end{Lem}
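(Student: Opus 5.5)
We compare the two Stieltjes transforms through their resolvents. Write $\bm A=\tilde{\bm J}^{D'}$ and $\bm B=\tilde{\bm J}^{D}+\bar\lambda\bm I$. Both are real symmetric, so for $z\in\mathbb C\backslash\mathbb R$ the resolvents $\bm G_{\bm A}(z)=(\bm A-z\bm I)^{-1}$ and $\bm G_{\bm B}(z)$ exist and have operator norm at most $1/|\mathfrak{Im}(z)|$. The second resolvent identity gives
\begin{equation*}
\bm G_{\bm A}(z)-\bm G_{\bm B}(z)=\bm G_{\bm A}(z)\,(\bm B-\bm A)\,\bm G_{\bm B}(z),
\end{equation*}
and the two transforms are $m_{\bm A}(z)=\frac1N\Tr\bm G_{\bm A}(z)$ and $m_{\bm B}(z)=\frac1N\Tr\bm G_{\bm B}(z)$. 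Therefore
\begin{equation*}
m_{\bm A}(z)-m_{\bm B}(z)=\frac1N\Tr\big(\bm G_{\bm B}(z)\bm G_{\bm A}(z)(\bm B-\bm A)\big),
\end{equation*}
where we used cyclicity of the trace. Moreover $\bm A-\bm B=\bm D_N-\bar\lambda\bm I$ is diagonal.

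Next we bound this trace with the Cauchy--Schwarz inequality for the Hilbert--Schmidt inner product, $|\Tr(\bm X\bm Y)|\le\lVert\bm X\rVert_F\lVert\bm Y\rVert_F$. Take $\bm X=\bm G_{\bm B}\bm G_{\bm A}$ and $\bm Y=\bm B-\bm A$. Since $\lVert\bm G_{\bm B}\bm G_{\bm A}\rVert_F\le\lVert\bm G_{\bm B}\rVert_{op}\lVert\bm G_{\bm A}\rVert_{op}\lVert\bm I\rVert_F\le\sqrt N/(\mathfrak{Im} z)^2$, we get
\begin{equation*}
|m_{\bm A}(z)-m_{\bm B}(z)|\le\frac1N\cdot\frac{\sqrt N}{(\mathfrak{Im} z)^2}\,\lVert\bm D_N-\bar\lambda\bm I\rVert_F .
\end{equation*}
The right-hand side equals $\sqrt{\frac1N\lVert\bm D_N-\bar\lambda\bm I\rVert_F^2}/(\mathfrak{Im} z)^2=\sqrt{\Delta(\bar\lambda)}/(\mathfrak{Im} z)^2$, which is the claimed bound.

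There is one notational point to settle. The paper writes $\lVert\cdot\rVert_F$ for the sum of squares, so $\Delta(\bar\lambda)=\frac1N\sum_i(\tilde J^{D'}_{ii}-\bar\lambda)^2$. We must use this same reading in the Cauchy--Schwarz step, i.e. the quantity under the square root is the sum of squares of the diagonal deviations, and then the factor $\sqrt N$ from $\lVert\bm I\rVert_F$ combines with $1/N$ to give the normalization $1/\sqrt N$. Beyond this the argument is routine: it needs only the resolvent identity and elementary norm inequalities. A possible alternative is to write the transforms as averages of $1/(\lambda_\alpha-z)$, use that $x\mapsto 1/(x-z)$ is $(\mathfrak{Im} z)^{-2}$-Lipschitz, and apply Hoffman--Wielandt; this gives the same bound, but the resolvent route is more direct.
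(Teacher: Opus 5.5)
Your proof is correct and follows the paper's argument: the second resolvent identity, the normalized trace, and a trace inequality in which both resolvents are bounded in operator norm by $1/|\mathfrak{Im}(z)|$ and the diagonal perturbation $\bm D_N-\bar\lambda\bm I$ is measured in Frobenius norm, which gives the $N^{-1/2}$ normalization. Your Hilbert--Schmidt Cauchy--Schwarz step is a cleaner form of the paper's rank-weighted trace inequality, and your sign in the resolvent identity, $\bm G_{\bm A}(\bm B-\bm A)\bm G_{\bm B}$, is the correct one; the paper writes $\bm A-\bm B$, which is harmless because only the absolute value is bounded.
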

\begin{proof}
    By the resolvent identity
\begin{equation*}
\begin{split}    
    \frac{1}{{\tilde{\bm J}}^{D'}-z\bm I}-\frac{1}{{\tilde{\bm J}}^{D}+\bar \lambda \bm I-z\bm I}&= \frac{1}{{\tilde{\bm J}}^{D'}-z\bm I} [{\tilde{\bm J}}^{D'}-({\tilde{\bm J}}^{D}+\bar \lambda \bm I)]\frac{1}{{\tilde{\bm J}}^{D}+\bar \lambda \bm I-z\bm I}=\\
    &=\frac{1}{{\tilde{\bm J}}^{D'}-z\bm I} [\bm D_N -\bar \lambda \bm I]\frac{1}{{\tilde{\bm J}}^{D}+\bar \lambda \bm I-z\bm I}.
\end{split}
\end{equation*}
Taking the normalized trace:
\begin{equation*}
    \begin{split}
m_{{\tilde{\bm J}}^{D'}}(z)-m_{{\tilde{\bm J}}^{D}+\bar \lambda \bm I}(z)&= \frac1N \Tr \Big(\frac{1}{{\tilde{\bm J}}^{D'}-z\bm I} [\bm D_N -\bar \lambda \bm I]\frac{1}{{\tilde{\bm J}}^{D}+\bar \lambda \bm I-z\bm I}\Big).
    \end{split}
\end{equation*}
 Using $\vert\Tr(\bm A\bm B\bm C)\vert \le \sqrt{\text{rank}(\bm C)}\lVert \bm A \lVert_{op} \lVert \bm B \lVert_{op} \lVert \bm C \lVert_F$, we have
\begin{equation}
    \vert m_{{\tilde{\bm J}}^{D'}}(z)-m_{{\tilde{\bm J}}^{D}+\bar \lambda \bm I}(z)\vert \le \frac1{\sqrt{N}}\Big\lVert \frac{1}{{\tilde{\bm J}}^{D'}-z\bm I} \Big\lVert_{op}\cdot \Big\lVert \frac{1}{{\tilde{\bm J}}^{D}+\bar \lambda \bm I-z\bm I} \Big\lVert_{op}\cdot \lVert \bm D_N -\bar \lambda \bm I\lVert_F,
\end{equation}
as $\text{rank}(\bm D_N -\bar \lambda \bm I)\le N$.
Now, both ${\tilde{\bm J}}^{D'}$ and ${\tilde{\bm J}}^{D}+\bar \lambda \bm I$ are real and symmetric (thus Hermitian), from which it follows that
\begin{equation}
    \Big\lVert \frac{1}{{\tilde{\bm J}}^{D'}-z\bm I} \Big\lVert_{op}\le \frac1{\vert\mathfrak{Im}(z)\vert},
\end{equation}
and similarly for $({\tilde{\bm J}}^{D}+\bar \lambda \bm I-z\bm I)^{-1}$. Using these results, and expressing everything in terms of $\Delta$, we get the thesis.
\end{proof}

%

\begin{Theorem}\label{thm:app_thm1}
Let $g(t) = ({\mathcal Q(t)+t^{-1}})^{-1}$, with $\mathcal Q(t) = \lim_{N\to\infty} \frac{1}{NM}\operatorname{Tr}[(\bm I+ t \tilde {\bm C})^{-1}\bm \Gamma]$ and $\bm \Gamma = \mathbb E  \bm\chi_i\bm\chi_i^T$. Then: 
\begin{equation}
   \max_{i\le N}  \vert G_{\tilde{\bm J}^{H'},ii}(-t^{-1})-  g(t)\vert \xrightarrow{a.s.}0.
\end{equation}
\end{Theorem}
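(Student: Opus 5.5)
We need to show that the diagonal resolvent entries of $\tilde{\bm J}^{H'}$ at $z=-1/t$ concentrate uniformly around a deterministic value $g(t)$. The plan is a Schur-complement (leave-one-out) argument on the rows of the data matrix, combined with quadratic-form concentration and a union bound over $i\le N$.

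\textbf{Step 1: Schur complement.} Write $\tilde{\bm J}^{H'} = \frac{1}{NM}\bm X\bm X^T$, where $\bm X$ is the $N\times MK$ matrix of examples. Its $i$-th row is $\bm x_i = (\xi^\mu_i\chi^\mu_{a,i})_{\ell}$, and distinct rows are independent, since both $\bm\xi$ and $\bm\chi$ are independent across $i$. Let $\bm X_{(i)}$ denote $\bm X$ with row $i$ removed, and set $\tilde{\bm C}_{(i)}=\frac1{NM}\bm X_{(i)}^T\bm X_{(i)}$. Setting $z=-1/t$, the Schur complement formula together with the push-through identity gives
\[
G_{ii}(z)=\Big(\tfrac{1}{NM}\bm x_i^T\bm x_i - z - \tfrac{1}{(NM)^2}\bm x_i^T\bm X_{(i)}^T(\tfrac{1}{NM}\bm X_{(i)}\bm X_{(i)}^T - z)^{-1}\bm X_{(i)}\bm x_i\Big)^{-1}
=\Big(t^{-1}+\tfrac{1}{NM}\bm x_i^T(\bm I+t\tilde{\bm C}_{(i)})^{-1}\bm x_i\Big)^{-1}.
\]
The second equality follows from the identity $\bm I-\bm A(\bm A+\lambda)^{-1}=\lambda(\bm A+\lambda)^{-1}$ applied with $\bm A=t\tilde{\bm C}_{(i)}$. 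This already has the form $(\mathcal Q+t^{-1})^{-1}$. Because the quadratic form is nonnegative, $G_{ii}\in(0,t]$ deterministically, and the map $q\mapsto(q+t^{-1})^{-1}$ is $t^2$-Lipschitz on $q\ge0$. It therefore suffices to prove $\max_i|Q_i-\mathcal Q(t)|\to0$ almost surely, where $Q_i=\frac1{NM}\bm x_i^T\bm R_{(i)}\bm x_i$ and $\bm R_{(i)}=(\bm I+t\tilde{\bm C}_{(i)})^{-1}$, with $\|\bm R_{(i)}\|_{op}\le1$.

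\textbf{Step 2: Conditional concentration.} Condition on $\bm X_{(i)}$; then $\bm x_i$ is independent of $\bm R_{(i)}$. Write $\bm x_i=\bm\Xi_i\bm\chi_i$, where $\bm\Xi_i$ is the diagonal sign matrix built from $\xi^\mu_i$ (repeated $M$ times). Since $\xi_i^\mu$ are Rademacher and independent of $\bm\chi_i$, we have $\mathbb E\,\bm x_i\bm x_i^T$ block diagonal in $\mu$, with blocks equal to $\bm\Gamma=(1-r^2)\bm I_M+r^2\bm 1\bm1^T$. We interpret $\bm\Gamma$ in the statement as this $MK\times MK$ covariance. Thus $\mathbb E[Q_i\mid \bm X_{(i)}]=\frac1{NM}\operatorname{Tr}\bm R_{(i)}\bm\Gamma$.

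For concentration, the entries of $\bm x_i$ are not independent, but they are Lipschitz functions of the independent bounded variables $(\xi^\mu_i,\chi^\mu_{a,i})$, and the blocks are independent across $\mu$ with $M$ fixed. Two routes are available:
\begin{itemize}
\item a Hanson--Wright inequality for vectors with independent bounded blocks of fixed size, which gives fluctuations of $\frac1{NM}\bm x^T\bm R\bm x$ of order $\|\bm R\|_F/(NM)\le N^{-1/2}$ with sub-exponential tails;
\item alternatively, a fourth- or higher-moment bound, computed via the block structure, giving $\mathbb E|Q_i-\mathbb E Q_i|^{2p}\le C_pN^{-p}$.
\end{itemize}
Either way, $\mathbb P(|Q_i-\frac1{NM}\operatorname{Tr}\bm R_{(i)}\bm\Gamma|>\epsilon)\le C_pN^{-p}\epsilon^{-2p}$. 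Taking $p=3$, a union bound over $i\le N$ gives a summable sequence, and Borel--Cantelli then yields uniform almost-sure convergence.

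\textbf{Step 3: Removing the leave-one-out and taking the limit.} Two things remain. First, $\tilde{\bm C}-\tilde{\bm C}_{(i)}$ is rank one, so $\frac1{NM}|\operatorname{Tr}(\bm R_{(i)}-\bm R)\bm\Gamma|\le \frac{C\|\bm\Gamma\|_{op}}{NM}$ deterministically, with $\|\bm\Gamma\|_{op}=\mu_2M$. Second, $\frac1{NM}\operatorname{Tr}\bm R\bm\Gamma$ must converge almost surely to the deterministic limit $\mathcal Q(t)$. By McDiarmid's inequality (or the Efron--Stein martingale bound) over independent rows, each row changes the trace by $O(1/N)$, so the trace concentrates around its mean with $e^{-cN\epsilon^2}$ tails. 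Convergence of the mean is exactly the assumed existence of the limit defining $\mathcal Q(t)$; if needed, it can be tied to the replica/deterministic-equivalent result of the main theorem.

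\textbf{Main obstacle.} The hardest step is Step 2: getting concentration uniform in $i$ for quadratic forms of a vector with dependent coordinates, namely the shared $\xi^\mu_i$ within each block. I would first try the block Hanson--Wright route using a moment computation with $p=3$. If that is awkward, the fallback is to rotate each block by the orthogonal $\bm P$ used in the replica derivation, which decorrelates the coordinates, and then apply a standard high-moment bound for quadratic forms with uncorrelated bounded-dependence entries.
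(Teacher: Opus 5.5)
Your proof is correct and has the same skeleton as the paper's: the Schur complement at $z=-1/t$ giving $G_{\tilde{\bm J}^{H'},ii}=(t^{-1}+\mathcal Q_{N,i})^{-1}$, leave-one-out independence, a union bound with Borel--Cantelli, and the $t^2$-Lipschitz map. Where you differ is the crux, namely the dependence among the coordinates of $\tilde{\bm\xi}_i$ within each class block.

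The paper never needs a block Hanson--Wright inequality. It uses the product structure $\tilde\xi^\mu_{a,i}=\xi^\mu_i\chi^\mu_{a,i}$ to write $\mathcal Q_{N,i}=\frac1N\bm\xi_i^T\bm K_{(i)}\bm\xi_i$, where the $K\times K$ matrix $\bm K_{(i)}$ depends only on $\bm\chi_i$ and $\tilde{\bm C}_{(i)}$, and then conditions in two stages. Given $\bm\chi_i$ and the other rows, $\bm\xi_i$ has independent Rademacher entries, so the standard Hanson--Wright inequality applies with $\|\bm K_{(i)}\|_{op}\le 1$. The remaining fluctuation of $\frac1N\operatorname{Tr}\bm K_{(i)}=\frac1{NM}\bm\chi_i^T\bm H_{(i)}\bm\chi_i$ is again a quadratic form in independent coordinates. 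After centering $\bm\chi_i=\bm\eta_i+r\bm 1$, it splits into a Hanson--Wright term and a linear term handled by Hoeffding. The result is exponential tails from off-the-shelf inequalities.

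Your sixth-moment bound for quadratic forms in vectors with independent, bounded, centered blocks of fixed size $M$ is true. It follows, for example, from Rosenthal's inequality on the diagonal blocks plus decoupling on the off-diagonal ones, and it would cover any noise model with that block structure. As written, though, it is only asserted, so this is where your proof still needs real work. Your fallback of rotating blocks by $\bm P$ would not help: it decorrelates coordinates without making them independent. The cheap fix inside your own framework is exactly the paper's observation that conditioning on $\bm\chi_i$ restores independent coordinates.

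In the other direction, your Step~3 is more careful than the paper's. The paper dismisses the leave-one-out correction as an $\mathcal O(N^{-1})$ rank-one effect and lets $\mathcal Q(t)$ be a possibly realization-dependent limit. It recovers determinism only afterwards, from the assumed self-averaging of the Stieltjes transform. Your explicit $\mathcal O(\mu_2/N)$ bound and the McDiarmid step make $\mathcal Q(t)$ deterministic directly.

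Your reading of $\bm\Gamma$ as the block-diagonal covariance $\mathbb E\,\tilde{\bm\xi}_i\tilde{\bm\xi}_i^T$ is the right one. It matches the explicit formula the paper actually uses, whereas the literal $\mathbb E\,\bm\chi_i\bm\chi_i^T$ has off-block entries $r^2$.
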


\begin{proof}\leavevmode\par\medskip
{\it Resolvent cavity equations.} Let us call the column vector $\tilde{\bm \xi}_i = ( \tilde\xi_i^{1},\dots,\tilde\xi_i ^{KM})^T$ of length $KM$ -- corresponding to the column in the examples matrix at fixed neuron index $i$ -- and $\tilde {\bm \xi}_{\neg i }$ the $KM\times (N-1)$ matrix obtained from $\tilde{ \bm{\xi}}$ upon removing the $i$-th column. This way, one can express the whole unsupervised Hebbian matrix as

\begin{equation}
    \bm J ^{H'}=\begin{pmatrix}
\tilde J_{ii}^{H'} &\frac1{NM} \tilde{\bm \xi}_i^T  \tilde{\bm \xi}_{\neg i}\\
\frac1{NM}\tilde{\bm \xi}_{\neg i} ^T\tilde{\bm \xi}_i   &\tilde {\bm J}_{(i)}^{H'}
\end{pmatrix},
\end{equation}
with $\tilde {\bm J}_{(i)}^{H'}$ being the $N-1\times N-1$ $i$-th minor matrix of  $\tilde {\bm J}^{H'}$. By straightforward application of the Schur complement formula and using $\tilde J_{ii}^{H'}=\alpha$, we have
\begin{equation}
    G_{\bm J^{H'},ii}(z)= \frac{1}{\alpha-z -\frac{1}{NM }\tilde{\bm \xi}_i^T \bm B_{(i)}(z) \tilde{\bm \xi}_i },
\end{equation}
with $\bm B_{(i)}(z) = \frac1{NM}\tilde{\bm \xi}_{\neg i}\bm G_{(i)}(z) \tilde{\bm \xi}_{\neg i} ^T$ (with dimension $KM\times KM$), and $\bm G_{(i)}(z)$ being the resolvent matrix associated to $\tilde {\bm J}_{(i)}^{H'}$. We focus on the quadratic form
\begin{equation}
    F_N (z) = \frac{1}{NM }\tilde{\bm \xi}_i^T\bm B_{(i)}(z) \tilde{\bm \xi}_i .
\end{equation}
Using the expression of $\bm B_{(i)}$ in terms of the coupling matrix $\tilde {\bm J}_{(i)}^{H'}$ and Woodbury and resolvent identities, it is possible to see that
\begin{equation}
    \bm B_{(i)}(z)=\bm I- (\bm I-\frac1z \tilde {\bm C}_{(i)})^{-1},
\end{equation}
where $\tilde C_{(i)}^{l,l'}= \frac1{NM}\sum_{k\neq i}\tilde \xi^l_k\tilde \xi^{l'}_k$. Then
$-\frac{1}{NM }\tilde{\bm \xi}_i^T \bm B_{(i)}(z) \tilde{\bm \xi}_i = \frac{1}{NM }\tilde{\bm \xi}_i^T (\bm I-\frac1z \tilde {\bm C}_{(i)})^{-1}\bm  \tilde{\bm \xi}_i-\alpha$ (where we used again the fact that $\frac1{NM} \tilde{\bm\xi}_i ^T\tilde{\bm\xi}_i =\alpha$),
then
\begin{equation}
    G_{\bm J^{H'},ii}(z)= \frac{1}{\frac{1}{NM }\tilde{\bm \xi}_i^T (\bm I-\frac1z \tilde {\bm C}_{(i)})^{-1}\tilde{\bm \xi}_i -z }.
\end{equation}
The crucial point in this expression is that the resolvent $(\bm I-\frac1z \tilde {\bm C}_{(i)})^{-1}$ is now independent of $\tilde{\bm \xi}_i$, thus conditioning on $\tilde{\bm \xi}_{\neg i}$ it is a deterministic matrix. We now specialize everything at $z=-1/t$ with $t>0$,\footnote{Notice that, expressing $\tilde{\bm C} = \frac1{NM}\tilde{\bm \xi}_i \tilde{\bm \xi}_i ^T +\tilde{\bm C}_{(i)}$ and using Sherman-Morrison formula to $(\bm I+t\tilde{\bm C}_{(i)})^{-1}$, one can easily recover Eq. \eqref{eq:app2_1} for the diagonal entries of $\tilde{\bm J }^{D'}$.} and focus in particular on the quadratic form
\begin{equation}
    \mathcal Q_{N,i} (t) = \frac{1}{NM}\tilde{\bm \xi}_i ^T (\bm I+ t \tilde {\bm C}_{(i)})^{-1}\tilde {\bm \xi}_i,
\end{equation}
so that $G_{\tilde{\bm J}^{H'},ii}(-t^{-1})= ({\mathcal Q_{N,i} (t)+t^{-1}})^{-1}\doteq F_t({\mathcal Q_{N,i} (t)})$. The denominator is always non-vanishing since $(\bm I+ t \tilde {\bm C}_{(i)})^{-1}$ is positive definite. Now, 
\begin{equation}
    \mathcal Q_{N,i} (t) =\frac1{NM}\sum_{\mu\nu}\sum_{ab} \tilde \xi^{\mu}_{a,i} (\bm I+ t \tilde {\bm C}_{(i)})^{-1}_{(\mu a)(\nu b)}\tilde \xi^{\nu}_{b,i}= \frac1N \sum_{\mu \nu }\xi^\mu_i\Big(\frac1M \sum_{ab} \chi ^{\mu}_{a,i} (\bm I+ t \tilde {\bm C}_{(i)})^{-1}_{(\mu a)(\nu b)} \chi^{\nu}_{b,i}\Big) \xi^\nu_i= \frac1N \bm \xi^T_i \bm K_{(i)} \bm \xi_i.
\end{equation}
\par\medskip
{\it Concentration of quadratic forms.} 
Let us now define the sub-$\sigma$-algebras
$\mathcal F_1 =\{\tilde {\bm \xi} \vert\tilde {\bm \xi}_{\neg i} \text{ fixed}\}$ and $\mathcal F_2 =\{\tilde {\bm \xi} \vert\tilde {\bm \xi}_{\neg i}, \bm \chi_i \text{ fixed}\}$ obtained by conditioning the examples at sites $k \neq i$ and, in the latter, also the multiplicative noise at site $i$. With these definitions, the matrix $ \tilde {\bm C}_{(i)}$ is deterministic, and  $\bm K_{(i)}$ is fixed w.r.t. $\mathcal F_2$.
With these definitions, we have
$$
\mathbb E[\mathcal  Q_{N,i}(t)\vert \mathcal F_2]= \frac1N \sum_{\mu\nu} (\bm K_{(i)})_{\mu\nu}\mathbb E_{\bm \xi_i }\xi^\mu_i\xi^\nu_i =\frac1N\Tr \bm K_{(i)},
$$
while
\begin{equation}
    \mathbb E[\mathcal  Q_{N,i}(t)\vert \mathcal F_1]=\frac1N\mathbb E_{\bm \chi_i }\operatorname{{Tr}} \bm K_{(i)}= \frac 1{NM}\operatorname{{Tr}}[(\bm I + t \tilde{\bm C}_{(i)})^{-1} \bm \Gamma],
\end{equation}
with $\bm \Gamma= \mathbb E_{\bm\chi _i}\bm \chi_i \bm \chi_i ^T$ the second-moment (block diagonal) $KM\times KM$ matrix of the noise at site $i$, given by $\Gamma_{(\mu a),(\nu b)}=\delta_{\mu\nu}\left[\delta_{ab}+r^2\left(1-\delta_{ab}\right)\right]$. Notice that, at finite $N$, $\mathbb E[\mathcal  Q_{N,i}(t)\vert \mathcal F_1]$ is still a function of patterns at sites $k\neq i$. Furthermore, concentration inequalities are expected to hold, and so it is natural to compare $\mathcal Q_{N,i}$ with the $i$-averaged counterpart, namely $\mathbb E[Q_{N,i}\vert \mathcal F_1] $. To do this, we consider the fluctuations $\vert Q_{N,i}-\mathbb E[Q_{N,i}\vert \mathcal F_1]\vert$ in the worst case scenario, and estimate
\begin{equation}
    \mathbb P \Big(\max_{i\le N}\Big\vert \frac1N \bm \xi^T_i \bm K_{(i)} \bm \xi_i-\mathbb E [\frac1N \bm \xi^T_i \bm K_{(i)} \bm \xi_i\vert\mathcal F_1] \Big \vert \ge \epsilon \Big\vert \mathcal F_1\Big).
\end{equation}
By subadditivity, this probability is bounded as $\mathbb P (\max_{i\le N}\vert Q_{N,i}-\mathbb E[Q_{N,i}\vert \mathcal F_1]\vert \ge \epsilon \vert \mathcal F_1)\le \sum_{i=1}^N\mathbb P (\vert Q_{N,i}-\mathbb E[Q_{N,i}\vert \mathcal F_1]\vert \ge \epsilon \vert \mathcal F_1)$, so that we can focus on single events. Now, by triangle inequality
\begin{equation}\label{eq:eq2}
    \begin{split}
        &\mathbb P \Big( \Big\vert \frac1N \bm \xi^T_i \bm K_{(i)} \bm \xi_i-\mathbb E [\frac1N \bm \xi^T_i \bm K_{(i)} \bm \xi_i\vert\mathcal F_1] \Big \vert \ge \epsilon \Big\vert \mathcal F_1\Big)\le \\
        \le &\,\mathbb P \Big( \Big\vert \frac1N \bm \xi^T_i \bm K_{(i)} \bm \xi_i-\mathbb E [\frac1N \bm \xi^T_i \bm K_{(i)} \bm \xi_i\vert\mathcal F_2] \Big \vert \ge \frac\epsilon2 \Big\vert \mathcal F_1\Big)+\mathbb P \Big( \Big\vert\mathbb E [\frac1N \bm \xi^T_i \bm K_{(i)} \bm \xi_i\vert\mathcal F_2] -\mathbb E [\frac1N \bm \xi^T_i \bm K_{(i)} \bm \xi_i\vert\mathcal F_1] \Big \vert \ge \frac\epsilon2 \Big\vert \mathcal F_1\Big).
    \end{split}
\end{equation}
For the first contribution, by tower rule we have
\begin{equation}\label{eq:eq2.1}
    \begin{split}
        \mathbb P \Big( \Big\vert \frac1N \bm \xi^T_i \bm K_{(i)} \bm \xi_i-\mathbb E [\frac1N \bm \xi^T_i \bm K_{(i)} \bm \xi_i\vert\mathcal F_2] \Big \vert \ge \frac\epsilon2 \Big\vert \mathcal F_1\Big)= \mathbb E \Big[\mathbb P \Big( \Big\vert \frac1N \bm \xi^T_i \bm K_{(i)} \bm \xi_i-\mathbb E [\frac1N \bm \xi^T_i \bm K_{(i)} \bm \xi_i\vert\mathcal F_2] \Big \vert \ge \frac\epsilon2 \Big\vert \mathcal F_2\Big)\Big\vert \mathcal F_1\Big] .
    \end{split}
\end{equation}
Now, the argument of the $\mathcal F_1$-expectation can be tackled analytically, since $\bm K_{(i)}$ is a deterministic matrix w.r.t. $\mathcal F_2$, while the pattern $\bm \xi_i$ is a zero-mean subgaussian random vector with independent entries and $\lVert \bm \xi_i \lVert _{\psi_2} = 1$. Then, by Hanson-Wright inequality and Eq. \eqref{eq:eq2.1}, we have
\begin{equation}
    \mathbb P \Big( \Big\vert \frac1N \bm \xi^T_i \bm K_{(i)} \bm \xi_i-\mathbb E [\frac1N \bm \xi^T_i \bm K_{(i)} \bm \xi_i\vert\mathcal F_2] \Big \vert \ge \frac\epsilon2 \Big\vert \mathcal F_1\Big)\le2 \,\mathbb  E \Big[ \exp\Big(- c N \min\Big\{\frac{\epsilon^2}{\frac4N \Tr \bm K_{(i)}^2}, \frac{\epsilon }{2\lVert  \bm K_{(i)} \lVert _{op}}\Big\}\Big)\Big\vert \mathcal F_1\Big],
\end{equation}
for some $c>0$. Since small values of $\frac1N \Tr \bm K_{(i)}^2$ and $\lVert \bm K_{(i)} \lVert_{op}$ produce a stronger concentration, we can upper bound the r.h.s. by upper-bounding the norms. Since the matrix has size $K\times K$, it follows that $\frac1N \Tr\bm K_{(i)}^2 \le \alpha  \lVert \bm K_{(i)}\lVert _{op}^2$. Now, we can put $ \bm K_{(i)}$ in the form $\frac1M \bm X_i^T (\bm I+t \tilde{\bm C}_{(i)})^{-1}\bm X_i$ with $\bm X_i$ being the $MK\times K$ matrix with entries $(\bm X_{i})_{(\mu,a),\nu}= \delta_{\mu\nu} \chi^{\mu}_{a,i},$\footnote{Namely, $\bm X_i$ is the matrix obtained by stacking the noise $\bm\chi_i$ in columns according to their class index $\mu$; more precisely, the first $M$ rows of the first column contain $\{\chi^{1}_{a,i}\}_{a=1}^M$, the second $M$ rows of the second column group $\{\chi^{2}_{a,i}\}_{a=1}^M$, and so on.} such that $\bm X_{i}^T \bm X_i=M \bm I_{K}$. Now, since $\tilde {\bm C}_{(i)}$ is PSD and $t\ge 0$, the eigenvalues of $(\bm I+t \tilde{\bm C}_{(i)})^{-1}$ are at most 1, thus $\lVert (\bm I+t \tilde{\bm C}_{(i)})^{-1}\lVert _{op}\le 1$. This implies that $\lVert \bm K_{(i)}\lVert_{op}=\lVert\frac1M \bm X_i^T (\bm I+t \tilde{\bm C}_{(i)})^{-1}\bm X_i\lVert_{op} \le \frac1M \lVert \bm X_i ^T \bm X_i \lVert_{op}\lVert (\bm I+t \tilde{\bm C}_{(i)})^{-1}\lVert_{op}\le 1 $. Using $\frac1N \Tr \bm K_{(i)}^2\le \alpha$ and $\lVert \bm K_{(i)}\lVert_{op}\le 1$, and since $e^{-1/x}$ is increasing for $x \ge 0$, we immediately have the bound

\begin{equation}
    \mathbb P \Big( \Big\vert \frac1N \bm \xi^T_i \bm K_{(i)} \bm \xi_i-\mathbb E [\frac1N \bm \xi^T_i \bm K_{(i)} \bm \xi_i\vert\mathcal F_2] \Big \vert \ge \frac\epsilon2 \Big\vert \mathcal F_1\Big)\le 2 \,\mathbb \exp\Big(- c N \min\Big\{\frac{\epsilon^2}{4\alpha}, \frac{\epsilon }{2}\Big\}\Big)\doteq  \exp\big(- c N g_1(\epsilon)\big).
\end{equation}
As for the second contribution in Eq. \eqref{eq:eq2}, we start by rewriting it as
\begin{equation}
    \mathbb P \Big( \Big\vert\mathbb E [\frac1N \bm \xi^T_i \bm K_{(i)} \bm \xi_i\vert\mathcal F_2] -\mathbb E [\frac1N \bm \xi^T_i \bm K_{(i)} \bm \xi_i\vert\mathcal F_1] \Big \vert \ge \frac\epsilon2 \Big\vert \mathcal F_1\Big)= \mathbb P \Big( \Big\vert \frac1N  \Tr \bm K_{(i)}  -\mathbb E_{\bm \chi_i}\frac1N \Tr \bm K_{(i)} \Big \vert \ge \frac\epsilon2 \Big\vert \mathcal F_1\Big) .
\end{equation}
Now, $\frac1N \Tr \bm K_{(i)}$ is a quadratic form of the noise $\bm \chi_i$ at site $i$. In particular:
\begin{equation}
    \frac1N \Tr \bm K_{(i)} = \frac1{NM}\sum_{\mu} \sum_{ab} \chi ^{\mu}_{a,i} (\bm I+ t \tilde {\bm C}_{(i)})^{-1}_{(\mu a)(\mu b)} \chi^{\mu}_{b,i}= \frac1{NM}\bm \chi_i ^T \bm H_{(i)}\bm \chi_i ,
\end{equation}
with $(\bm H_{(i)})_{(\mu a)(\nu b)}= \delta_{\mu\nu} (\bm I+ t \tilde {\bm C}_{(i)})^{-1}_{(\mu a)(\nu b)}$. Now, since $\bm \chi_i$ has non-centered entries, it is convenient to write $\bm\chi_i = \bm \eta_i + r \bm 1_{MK}$, so that $\mathbb E \bm \eta_i = 0$. With this representation, we have
\begin{equation}
    \frac1N  \Tr \bm K_{(i)}  -\mathbb E_{\bm \chi_i}\frac1N \Tr \bm K_{(i)}= \frac1{NM}\bm \eta_i ^T \bm H_{(i)}\bm \eta_i-\mathbb E_{\bm \eta}\frac1{NM}\bm \eta_i ^T \bm H_{(i)}\bm \eta_i+\frac{2r}{MN}\bm 1^T \bm H_{(i)}\bm \eta_i.
\end{equation}
Again by triangle inequality, one has
\begin{equation}\label{eq:eq3}
    \begin{split}
        &\mathbb P \Big( \Big\vert \frac1N  \Tr \bm K_{(i)}  -\mathbb E_{\bm \chi_i}\frac1N \Tr \bm K_{(i)} \Big \vert \ge \frac\epsilon2 \Big\vert \mathcal F_1\Big) \le \\
        \le &\,\mathbb P \Big( \Big\vert \frac1{NM}\bm \eta_i ^T \bm H_{(i)}\bm \eta_i-\mathbb E_{\bm \eta}\frac1{NM}\bm \eta_i ^T \bm H_{(i)}\bm \eta_i\Big \vert \ge \frac\epsilon4 \Big\vert \mathcal F_1\Big)+\mathbb P \Big( \Big\vert\frac{2r}{MN}\bm 1^T \bm H_{(i)}\bm \eta_i \Big \vert \ge \frac\epsilon4 \Big\vert \mathcal F_1\Big).
    \end{split}
\end{equation}
Proceeding as before, it is possible to show that $\frac1{NM} \Tr  \bm H_{(i)}^2 \le \alpha $ and $\lVert  \bm H_{(i)}\lVert_{op}\le 1$. The first contribution is therefore again bounded by Hanson-Wright inequality as
\begin{equation}
    \mathbb P \Big( \Big\vert \frac1{NM}\bm \eta_i ^T \bm H_{(i)}\bm \eta_i-\mathbb E_{\bm \eta}\frac1{NM}\bm \eta_i ^T \bm H_{(i)}\bm \eta_i\Big \vert \ge \frac\epsilon4 \Big\vert \mathcal F_1\Big)\le 2\exp\Big(-c' MN\min \Big\{\frac{\epsilon^2}{16 \alpha},\frac{\epsilon}{4}\Big\}\Big)\doteq  \exp\big(-  N Mg_2(\epsilon)\big),
\end{equation}
for some $c'>0$. For the second contribution, we use the fact that $\bm 1^T \bm H_{(i)}\bm \eta_i $ is a weighted linear combination of zero-mean i.i.d. bounded random variables (since $\vert \eta_i\vert \le2$), with the weights being numbers in $\mathcal F_1$. Further, the sum of the squares of the coefficients is $\lVert\bm 1 ^T\bm H_{(i)}\lVert^2\le\lVert\bm 1\lVert^2 \lVert\bm H_{(i)}\lVert_{op}^2=KM $. By Hoeffding inequality, it follows that
\begin{equation}
    \mathbb P \Big( \Big\vert\frac{2r}{MN}\bm 1^T \bm H_{(i)}\bm \eta_i \Big \vert \ge \frac\epsilon4 \Big\vert \mathcal F_1\Big) \le \exp \Big(-c'' \frac{M N\epsilon^2}{\alpha r^2}\Big)\doteq  \exp\big(- N Mg_3(\epsilon)\big),
\end{equation}
for some $c''>0$.
Putting all pieces together, one thus has
\begin{equation}
    \mathbb P \Big(\max_{i\le N}\Big\vert \frac1N \bm \xi^T_i \bm K_{(i)} \bm \xi_i-\mathbb E [\frac1N \bm \xi^T_i \bm K_{(i)} \bm \xi_i\vert\mathcal F_1] \Big \vert \ge \epsilon \Big\vert \mathcal F_1\Big)\le N \Big(2 \exp(-N g_1(\epsilon))+2\exp(-NM g_2 (\epsilon))+ \exp\big(-  N Mg_3(\epsilon)\big)\Big).
\end{equation}
Then, $\max_{i\le N}\Big\vert \frac1N \bm \xi^T_i \bm K_{(i)} \bm \xi_i-\mathbb E [\frac1N \bm \xi^T_i \bm K_{(i)} \bm \xi_i\vert\mathcal F_1]\Big\vert\to 0$ in probability (in $\mathcal F_1$). Further, since the upper bound is summable, convergence is almost sure by Borel-Cantelli:
\begin{equation}
    \max_{i\le N}\Big\vert  \mathcal Q_{N,i}(t)-\frac{1}{NM}\operatorname{Tr} [(\bm I+ t \tilde {\bm C}_{(i)})^{-1}\bm \Gamma]\Big\vert \xrightarrow{\mathcal F_1 -a.s.}0.
\end{equation}
Since in the limit $N\to\infty$ removing a single spin from the correlation matrix is irrelevant, we can safely drop the $i$-dependence from the correlation matrix (as it a 1-rank perturbation of order $\mathcal O (N^{-1})$). Then, defining the function $\mathcal Q(t)= \lim_{N\to\infty}\frac{1}{NM}\operatorname{Tr} [(\bm I+ t \tilde {\bm C})^{-1}\bm \Gamma] $, we have
\begin{equation}
    \mathbb P \Big(\lim_{N\to\infty}\max_{i\le N}\vert  \mathcal Q_{N,i}(t)-\mathcal Q(t)\vert =0\Big)=\mathbb E\Big[\mathbb P\Big(\lim_{N\to\infty}\max_{i\le N}\vert  \mathcal Q_{N,i}(t)-\mathcal Q(t)\vert =0\Big\vert \mathcal F_1\Big)\Big]=1,
\end{equation}
thus $\mathcal F_1$-a.s. convergence towards $\mathcal Q$ is indeed a.s. convergence: $\max_{i\le N}\vert  \mathcal Q_{N,i}(t)-\mathcal Q(t)\vert \xrightarrow{a.s.}0.$
\par\medskip
{\it Concentration of resolvent diagonal via Lipschitz.} 
The function $F_t(x) = (x+1/t)^{-1}$ is clearly Lipschitz for $x\ge0$ with Lipschitz constant $L=t^2$,\footnote{The restriction to $x\ge 0$ is not a problem since $(\bm I+ t \tilde {\bm C})^{-1} $ and $\bm \Gamma$ are PSD, thus the trace of the product is non-negative.} since 
\begin{equation}
    \vert \partial_x F_t(x)\vert = \frac{1}{(x+1/t)^2}\le t^2.
\end{equation}
Calling $g(t)= F_t(\mathcal Q(t))$, we have
\begin{equation}
   \max_{i\le N} \vert G_{\tilde{\bm J}^{H'},ii} (-t^{-1})- g(t)\vert=  \max_{i\le N}\vert F_t(\mathcal Q_{N,i})-F_t(\mathcal  Q(t))\vert\le t^2  \max_{i\le N}\vert \mathcal Q_{N,i} -\mathcal  Q(t)\vert \xrightarrow{a.s.}0,
\end{equation}
which proves our claim.
\end{proof}
As a consequence, for almost every realization of the patterns the diagonal entries of the resolvent of $\tilde{\bm J}^{H'}$ (and by Eq. \eqref{eq:app2_1} of $\tilde{\bm J}^{D'}$) converge to the same value. At this stage, the $g(t)$ function may still depend on the realization of the disorder. However, since $m_{\tilde{\bm J}^{H'}}(-t^{-1})= \frac1N \sum_{i=1}^N G_{\tilde{\bm J}^{H'},ii}(-t^{-1})$, we have
\begin{equation}
    \vert m_{\tilde{\bm J}^{H'}}(-t^{-1}) -g(t) \vert= \Big\vert\frac1N \sum_{i=1}^N G_{\tilde{\bm J}^{H'},ii}-g(t) \Big \vert\le \frac1N \sum_{i=1}^N\vert G_{\tilde{\bm J}^{H'},ii}-g(t)\vert\le \max_{i\le N}\vert G_{\tilde{\bm J}^{H'},ii}-g(t)\vert\xrightarrow{a.s.}0.
\end{equation}
the assumed self-averaging of the Stieltjes transform implies that $g(t)$ coincides almost surely with the deterministic limiting Stieltjes transform evaluated at $z=-1/t$.
As a direct consequence, 
we have
\begin{equation*}
    \begin{split}
   0\le     \Delta (\bar \lambda) &\le \Delta (\bar c_N) +(\bar \lambda-\bar c_N)^2=\left(\frac{1+t}{t^2}\right)^2\frac 1N\sum_i(G_{\bm J^{H'},ii}(-t^{-1})-g(t))^2+(\bar \lambda-\bar c_N)^2\le\\
        &\le \left(\frac{1+t}{t^2}\right)^2\max_{i\le N}[G_{\bm J^{H'},ii}(-t^{-1})-g(t)]^2+(\bar \lambda-\bar c_N)^2=\left(\frac{1+t}{t^2}\right)^2\Big(\max_{i\le N}\vert G_{\bm J^{H'},ii}(-t^{-1})-g(t)\vert \Big)^2+(\bar \lambda-\bar c_N)^2
        \xrightarrow{a.s.} 0,
    \end{split}
\end{equation*}
because of Thm. \ref{thm:app_thm1} and $\bar c_N \to \bar \lambda$. By virtue of Lem. \ref{lem:app_lem1}, this implies that
\begin{equation}
    \lim_{N\to\infty}\vert m_{{\tilde{\bm J}}^{D'}}(z)-m_{{\tilde{\bm J}}^{D}+\bar \lambda \bm I}(z)\vert = 0,
\end{equation}
almost surely, and therefore $\tilde{\bm J}^{D}$ and $\tilde {\bm J}^{D'}-\bar \lambda \bm I$ exhibit the same limiting spectral distribution, namely $\tilde \rho_t (\lambda)$ and $\tilde \rho_t'(\lambda+\bar \lambda)$. As numerical evidence that the diagonal of the interaction matrix $\tilde{ \bm J}^{D'}$ of the unsupervised dreaming model (see eq. \ref{eq:J_D_r}) self-averages in the thermodynamic limit for $t>0$, we compute its standard deviation for systems of different sizes. We did so, using the same control parameters as those used in Figure \ref{fig:spectra}, for several sizes between $N=250$ and $N=2000$, and different dreaming times, as is shown in Figure \ref{fig:diags}. We see that the $\sigma\sqrt{N}$, where $\sigma$ denotes the standard deviation of the diagonal (which basically corresponds to $\Delta(\bar c_N)$), remains approximately equal for each $t$ across all experiments.
\begin{figure*}[h!]
\centering
  \centering
  \includegraphics[width=.40\linewidth]{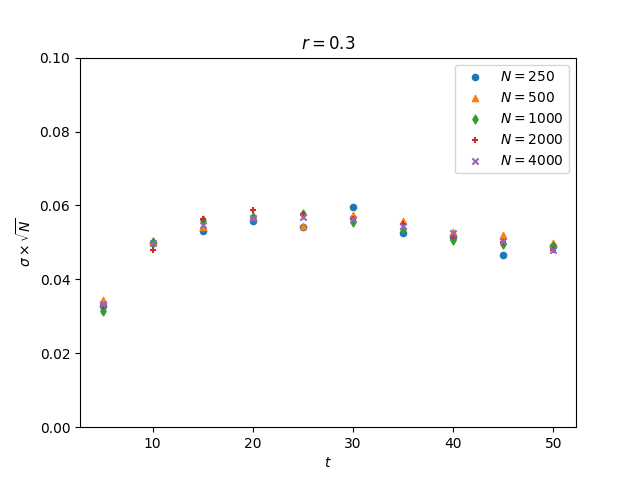}
\hspace{0cm}
  \centering
  \includegraphics[width=.40\linewidth]{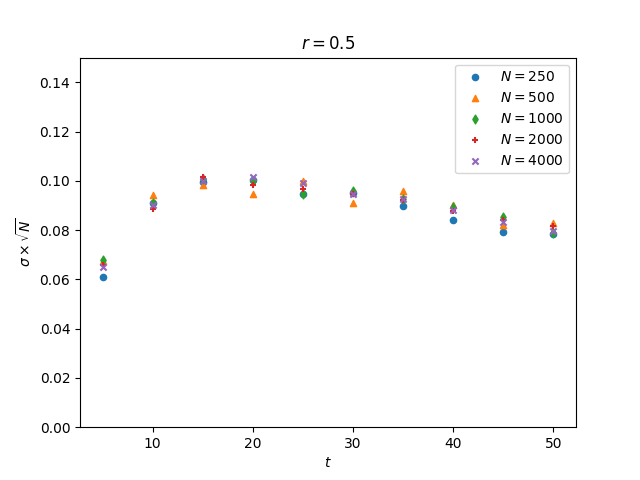}
\caption{Rescaled standard deviation of the diagonal entries of the interaction matrix of the unsupervised regularized Hebbian model for several values of $N$ (varying across the $x$-axis) and $t$ (varying across labels). On the $y$-axis, the standard deviation multiplied by $\sqrt{N}$ is shown. We used $\alpha=0.1$, $M=50$, and $r=0.3$ (left), and $r=0.5$ (right).}
\label{fig:diags}
\end{figure*}

\end{document}